% This is a sample LaTeX file for a JOTA paper. A sample figure file (Fig_1.pdf) is required to typeset this file.
%A standard way of writing LaTeX files is to give everything a label: sections, formulas, figures, references, etc. Labeling makes it easy to modify a LaTeX file, but it is often difficult to create and remember the labels. Labeling is not used in this example.

\documentclass[smallextended,envcountsect]{svjour3}
% The option smallextended is the standard JOTA format.
% The option referee makes the paper double-spaced.
% The option envcountsect numbers theorems, etc, by section.
% svjour3 is the document class for Springer journals.
\smartqed
%This command right justifies \qed throughout the paper.
\usepackage{graphicx}
\usepackage{amsfonts}
\usepackage{amsmath}
\usepackage{amssymb}
\usepackage{caption}
\usepackage{setspace}
\usepackage{color}
\usepackage[normalem]{ulem}
\usepackage[colorlinks,linkcolor=blue,citecolor=blue]{hyperref}

\usepackage{ifpdf}
\usepackage{color}
\definecolor{webgreen}{rgb}{0,.5,0}
\definecolor{webbrown}{rgb}{.8,0,0}
\definecolor{emphcolor}{rgb}{0.95,0.95,0.95}

%This package is used to insert figures.
\journalname{JOTA}

\newcommand{\lev}{L\'{e}vy }

\newcommand{\cadlag}{c\`{a}dl\`{a}g }

\newcommand {\p}{\mathbb{P}}
\newcommand {\R}{\mathbb{R}}
\newcommand {\E}{\mathbb{E}}
\newcommand{\e}{\mathbb{E}}
\newcommand {\ud}{\textup{d}}
\newcommand{\diff}{{\rm d}}

\newtheorem{assump}{Assumption}
\allowdisplaybreaks
\captionsetup{font={small, stretch=1}}

\begin{document}

\title{On the Bail-Out Optimal Dividend Problem}

\author{Jos\'e-Luis P\'erez \and Kazutoshi Yamazaki   \and  Xiang Yu}

\institute{Jos\'e-Luis P\'erez \at
             Centro de Investigaci\'on en Matem\'aticas \\
              Guanajuato, Mexico\\
              jluis.garmendia@cimat.mx
           \and
           Kazutoshi Yamazaki  \at
              Kansai University \\
              Osaka, Japan\\
              kyamazak@kansai-u.ac.jp
                         \and
           Xiang Yu (Corresponding author) \at
              The Hong Kong Polytechnic University\\
              Kowloon, Hong Kong\\
              xiang.yu@polyu.edu.hk
}

\date{Received: date / Accepted: date}
%The correct dates will be entered by the editor.

\maketitle

\begin{abstract}
This paper studies the optimal dividend problem with capital injection under the constraint that the cumulative dividend strategy is absolutely continuous. We consider an open problem of the general spectrally negative case and derive the optimal solution explicitly using the fluctuation identities of the refracted-reflected \lev process. The optimal strategy as well as the value function are concisely written in terms of the scale function.  Numerical results are also provided to confirm the analytical conclusions.
\end{abstract}
\keywords{stochastic control \and scale functions \and refracted-reflected \lev processes \and bail-out dividend problem}
\subclass{60G51 \and 93E20 \and 49J40}

%All acknowledgements should be placed in the back of the paper after Conclusions..

\section{Introduction}

In the bail-out model of de Finetti's optimal dividend problem, one wants to maximize the total expected dividends minus the costs of capital injection under the constraint that the surplus must be kept non-negative uniformly in time. Typically, a spectrally negative \lev process (a \lev process with only downward jumps) is used to model the underlying surplus process of an insurance company that increases because of premiums and decreases by insurance payments. Avram et al. \cite{1.} showed that it is optimal to reflect from below at zero and also from above at a suitably chosen threshold.

We investigate an extension with the absolutely continuous constraint on dividend strategies. To be precise, the cumulative dividend process must be absolutely continuous with respect to the Lebesgue measure  with its density bounded by a given constant. This problem (without bail-out) has been previously considered by \cite{2.} and \cite{3.} in the diffusive case, and \cite{4.} for the case of Cr\'amer-Lundberg processes with exponential jumps. In the classical setting, the set of admissible strategies is too general and counterintuitive in the context of insurance. Consequently, there have been several attempts to restrict the solution to more realistic strategies. The absolutely continuous condition is one way of achieving this goal without losing analytical tractability.

Regarding the version with both bail-out and the absolutely continuous condition, the dual case (the spectrally positive \lev case) has recently been solved by \cite{5.}. In this paper, we further consider the spectrally negative case for the underlying process. This can also be seen as the bail-out version of \cite{6.}, where they incorporated the absolutely continuous constraint, however, without capital injections.

Our ultimate aim is to verify the conjecture on the optimality of a \emph{refraction-reflection strategy} that reflects the surplus from below at zero in the classical sense and refracts the process (decreases the drift) at a suitably chosen threshold. The resulting controlled surplus process becomes the so-called refracted-reflected \lev process recently studied in \cite{7.}. Indeed, many interesting probabilistic properties of the refracted-reflected \lev process have been developed in \cite{7.}. However, as an important application to the optimal dividend problem with capital injection, it is still an open problem whether the optimal control for the spectrally negative case fits this type of refraction-reflection. This paper fills the gap and provides the closed-form choice of the threshold.

As is commonly used in the related literature, we adopt the scale function and the fluctuation identities so as to follow efficiently the ``guess-and-verify" procedure described below.
\begin{itemize}
\item[(1)] By focusing on the set of refraction-reflection strategies, we select a judicious candidate strategy via the \emph{smooth fit} principle.  In particular, we choose the threshold value such that the corresponding net present value (NPV) becomes continuously (resp.\ twice continuously) differentiable at the threshold for the case of bounded (resp.\ unbounded) variation.

\item[(2)] The optimality of the selected strategy is then confirmed by verifying the \emph{variational inequalities} that require the computation of the generators and certain slope conditions of the value functions.
\end{itemize}
In general, in the optimal dividend problem and its extensions, the verification of optimality is significantly more challenging for the spectrally negative case than the dual case. The difficulty typically lies in the required proof of the properties of the candidate value function above the barrier/threshold that separates the waiting and controlling regions. Intuitively speaking, this is difficult because, with negative jumps, the surplus can jump below from the controlling region to the waiting region as well as directly to the reflection region below the zero boundary, where the forms of the value function change. It is demonstrated in the literature that the optimality can fail by the choice of the \lev measure (see, e.g., \cite{6.} assumes the completely monotone \lev density). However, in the dual model, it is usually not necessary to assume any property on the \lev measure  (see \cite{8.,9.,10.,11.}).

Mathematically speaking, in our problem, the major challenge is to show the slope condition above the selected threshold such that the slope is bounded uniformly by $1$. Nonetheless, we show that the optimality holds for a general spectrally negative \lev case.  To this end, we use our observation that the slope of the candidate value function coincides with the Laplace transform of the ruin time of the refracted \lev process of \cite{12.}, which is monotone in the starting value. Other required computations such as generators and the slopes below the threshold can be performed efficiently by taking advantage of the analytical properties of the scale function.

The rest of the paper is organized as follows. In Section \ref{section_preliminary}, we review the spectrally negative \lev process and give the precise formulation of the bail-out optimal dividend control problem with the absolutely continuous condition. Section \ref{section_strategies} defines the refraction-reflection strategy and formulates the corresponding NPV of dividends minus capital injection using the scale function. Section \ref{section_threshold} provides the conjectured candidate threshold and Section \ref{section_verification} proves the optimality of the selected strategy. Some numerical examples are presented in Section \ref{section_numerics}. At last, we give our conclusions in Section \ref{section_conc}.

\section{Preliminaries}\label{section_preliminary}
\subsection{Spectrally Negative \lev Processes}
In this paper, we consider a spectrally negative L\'evy process $X$. For $x\in \R$, we denote the law of $X$ when it starts at $x$ by $\p_x$ and refer to it as $\p$ instead of $\p_0$ for convenience. $\e_x$ and $\e$ are the associated expectation operators.

Its Laplace exponent $\psi(\theta):[0,\infty[ \to \R$ is defined by ${\rm e}^{\psi(\theta)t}:=\e\big[{\rm e}^{\theta X_t}\big]$ for $t, \theta\ge 0$ with the \emph{L\'evy-Khintchine formula}
\begin{equation}%\label{lk}
\psi(\theta):=\gamma\theta+\frac{\sigma^2}{2}\theta^2+\int_{]-\infty,0[}\big({\rm e}^{\theta z}-1-\theta  z \mathbf{1}_{\{ z>-1\}}\big)\Pi(\ud  z), \quad \theta \geq 0,\notag
\end{equation}
where $\gamma\in \R$, $\sigma\ge 0$, and $\Pi$ is a measure on $]-\infty,0[$ called the L\'evy measure of $X$ that satisfies
$\int_{]-\infty,0[}(1\land  z^2)\Pi(\ud  z)<\infty$.

It is well-known that $X$ has paths of bounded variation if and only if $\sigma=0$ and $\int_{]-1, 0[}  |z|\Pi(\mathrm{d}  z)$ is finite. In this case, $X=\{X_t=ct-S_t, t\geq 0\}$, where $c:=\gamma-\int_{]-1,0[}  z \Pi(\mathrm{d}  z)$ and $\{S_t; t\geq0\}$ is a driftless subordinator. Note that necessarily $c>0$, as we have ruled out the case that $X$ has monotone paths. Its Laplace exponent is given by $\psi(\theta) = c \theta+\int_{]-\infty,0[}\big( {\rm e}^{\theta  z}-1\big)\Pi(\ud  z),$ for  $\theta \geq 0$.

\subsection{Bail-Out Optimal Dividend with the Absolutely Continuous Condition}
A strategy is a pair $\pi := \left( L_t^{\pi}, R_t^{\pi}; t \geq 0 \right)$ of nondecreasing, right-continuous, and adapted processes (with respect to the filtration generated by $X$) starting at zero, where $L^{\pi}$ is the cumulative amount of dividends and $R^{\pi}$ is that of the injected capital. With $V_{0-}^\pi := x$, and, $V_t^\pi := X_t - L_t^\pi + R_t^\pi$, $t \geq 0$, it is required that $V_t^\pi \geq 0$ a.s.\ uniformly in $t$.  In addition,  with $\delta > 0$ fixed, $L^\pi$ is required to be absolutely continuous with respect to the Lebesgue measure of the form $L_t^\pi = \int_0^t \ell^\pi_s \diff s$, $t \geq 0$, with $\ell^\pi$ restricted to take values in $[0,\delta]$ uniformly in time. As for $R^\pi$, it is assumed that $\int_{[0, \infty[} e^{-qt} \diff R_t^{\pi} < \infty$, a.s.

Assuming that $\beta > 1$ is the cost per unit injected capital and $q > 0$ is the discount factor, the expected NPV of dividends minus the costs of capital injection under a strategy $\pi$ becomes
\begin{align*}
v_{\pi} (x) := \mathbb{E}_x \left( \int_0^\infty e^{-q t} \ell_t^\pi  \diff t - \beta \int_{[0, \infty[} e^{-q t} \diff R_t^{\pi}\right), \quad x \in \mathbb{R}.
\end{align*}
The corresponding stochastic control problem is defined by
\begin{equation}\label{control:value}
v(x):=\sup_{\pi \in \mathcal{A}}v_{\pi}(x), \quad x \in \mathbb{R},
\end{equation}
where $\mathcal{A}$ is the set of all admissible strategies that satisfy the constraints described above.

Throughout the paper, to exclude the trivial case, we consider the next assumption.
\begin{assump} \label{assump_mean_finite}
We assume that $\E X_1 = \psi'(0+) > -\infty$.
\end{assump}

Moreover, as being commonly imposed in the literature (see \cite{6.}), the next assumption is made so that the process $Y:=\{Y_t:= X_t - \delta t,t\geq0\}$ does not have monotone paths.

\begin{assump}
For the case of bounded variation, let $c > \delta$.
\end{assump}

\section{Refraction-Reflection Strategies} \label{section_strategies}

Our objective is to show the optimality of the refraction-reflection strategy $\pi^{b} = (L^{0,b}, R^{0,b}, t \geq 0)$, with a suitable refraction level $b \geq 0$. Namely, dividends are paid at the maximal rate $\delta$ whenever the surplus process is above the pre-specified threshold $b$ while it is pushed upward by capital injection whenever it attempts to downcross zero. The resulting surplus process
\begin{align*}
U_t^{0,b} := X_t - L^{0,b}_t + R_t^{0,b}
\end{align*}
becomes the standard refracted-reflected \lev process of \cite{7.}.

In terms of the optimal control theory, we recognize that the candidate dividend strategy is of the bang-bang type, i.e., dividends should either be paid out at the maximum rate $\delta$ or at the rate $0$. On the other hand, the capital injection strategy, which is the reflection control, fits into the singular control framework. To wit, we can explicitly write the described cumulative dividend control as
%\begin{align*}
$L_t^{0,b} = \int_0^t \delta 1_{\{ U_s^{0,b} > b \}} \diff s$,
%\end{align*}
and \emph{for the case of bounded variation} we can write the candidate capital injection $R_t^{0,b} = \sum_{0 \leq s \leq t} |U_{s-}^{0,b}+\triangle X_s| 1_{\{U_{s-}^{0,b}+\triangle X_s < 0\}}$. Here, we define $\triangle \xi_t:=\xi_t-\xi_{t-}$, $t\geq 0$, for any \cadlag process $\xi$. For a formal construction of this process, we refer the reader to \cite{7.}.

Clearly, each aforementioned refraction-reflection strategy $\pi^{b}$ is admissible for any $b \geq 0$. We denote the corresponding expected NPV by
\begin{align} \label{v_pi}
v_b (x) := \mathbb{E}_x \left( \int_0^\infty e^{-q t} \diff L_t^{0,b} - \beta \int_{[0, \infty[} e^{-q t} \diff R_t^{0,b}\right), \quad x \in \mathbb{R}.
\end{align}

In order to express \eqref{v_pi}, we apply the fluctuation identities. Following the same notations as in \cite{12.}, we call $W^{(q)}$ and $\mathbb{W}^{(q)}$ the scale functions of $X$ and $Y$, respectively.  These are the mappings from $\R$ to $[0, \infty[$ that take the value zero on the negative half-line, while on the positive half-line, they are strictly increasing functions that are defined by their Laplace transforms
\begin{align} \label{scale_function_laplace}
\int_0^\infty  \mathrm{e}^{-\theta x} W^{(q)}(x) \diff x &= \frac 1 {\psi(\theta)-q}, \quad \theta > \Phi(q), \\
\int_0^\infty  \mathrm{e}^{-\theta x} \mathbb{W}^{(q)}(x) \diff x &= \frac 1 {\psi_Y(\theta) -q}, \quad \theta > \varphi(q).
\end{align}
Here $\psi_Y(\theta) := \psi(\theta) - \delta \theta$, $\theta \geq 0$, is the Laplace exponent for $Y$ and
\begin{align}
\begin{split}
\Phi(q) := \sup \{ \lambda \geq 0: \psi(\lambda) = q\} \quad \textrm{and} \quad \varphi(q) := \sup \{ \lambda \geq 0: \psi_Y(\lambda) = q\}. \notag
\end{split}
%\label{def_varphi}
\end{align}

We also define, for $x \in \R$,
\begin{align*}
\overline{W}^{(q)}(x) &:=  \int_0^x W^{(q)}(y) \diff y, \quad
Z^{(q)}(x) := 1 + q \overline{W}^{(q)}(x),  \\
\overline{Z}^{(q)}(x) &:= \int_0^x Z^{(q)} (z) \diff z = x + q \int_0^x \int_0^z W^{(q)} (w) \diff w \diff z.
\end{align*}
Noting that $W^{(q)}(x) = 0$ for $-\infty < x < 0$, we have
\begin{align}
\overline{W}^{(q)}(x) = 0, \quad Z^{(q)}(x) = 1  \quad \textrm{and} \quad \overline{Z}^{(q)}(x) = x, \quad x \leq 0.  \label{z_below_zero}
\end{align}
Analogously, we define $\overline{\mathbb{W}}^{(q)}$, $\mathbb{Z}^{(q)}$ and $\overline{\mathbb{Z}}^{(q)}$ for $Y$.  %The scale functions of $X$ and $Y$ are related, for $x \in \R$ and
%$p, q \geq 0$,  by the following equalities%\begin{align} \label{diff_scale_functions_X_Y}
%\overline{W}^{(q)}(x)-\overline{\mathbb{W}}^{(q)}(x)=\delta\int_0^x\mathbb{W}^{(q)}(x-u)W^{(q)}(u)\ud u, \quad x \in \R.
%\end{align}
%see the proof of the proof of Theorem 4 in \cite{KL}.
%
%after noting that
%\begin{align}\label{RLqp}
%&\int_0^x\mathbb{W}^{(p)}(x-y)\big[\delta W^{(q)}(y)-(q-p)\overline{W}^{(q)}(y)\big]\ud y=\overline{\mathbb{W}}^{(p)}(x)-\overline{W}^{(q)}(x), \quad \\
%&\int_0^x\mathbb{W}^{(p)}(x-y)\big[\delta Z^{(q)}(y)-(q-p)\overline{Z}^{(q)}(y)\big]\ud y=\overline{\mathbb{Z}}^{(p)}(x)-\overline{Z}^{(q)}(x)+\delta\overline{\mathbb{W}}^{(p)}(x),\label{RLqp2}
%\end{align}
%%\red{[moved the second line here]}
%%that can be obtained by performing an integration of identity (\ref{RLqp}).
From computations in \cite{7.}, we already know
\begin{align}\label{RLqp}
&\delta \int_0^x\mathbb{W}^{(q)}(x-y) W^{(q)}(y) \diff y=\overline{\mathbb{W}}^{(q)}(x)-\overline{W}^{(q)}(x), \quad \\
&\delta \int_0^x\mathbb{W}^{(q)}(x-y) Z^{(q)}(y)\diff y=\overline{\mathbb{Z}}^{(q)}(x)-\overline{Z}^{(q)}(x)+\delta\overline{\mathbb{W}}^{(q)}(x). \label{RLqp2}
\end{align}
%which can be proven by showing that the Laplace transforms on both sides are equal.

\begin{remark} \label{remark_scale_function_properties}
\begin{enumerate}
\item[(i)] $W^{(q)}$ and $\mathbb{W}^{(q)}$ are differentiable almost everywhere. In particular, if $X$ is of unbounded variation or the \lev measure is atomless, it is known that $W^{(q)}$ and $\mathbb{W}^{(q)}$ are $C^1(\R \backslash \{0\})$; see Theorem 3 of \cite{13.}.
\item[(ii)] As $x \downarrow 0$, by Lemma 3.1 of \cite{14.}, we have
\begin{align*}%\label{eq:Wqp0}
\begin{split}
W^{(q)} (0) &= \left\{ \begin{array}{ll} 0, & \textrm{if $X$ is of unbounded
variation,} \\ c^{-1}, & \textrm{if $X$ is of bounded variation,}
\end{array} \right.  %\\ \mathbb{W}^{(q)} (0) &= \left\{ \begin{array}{ll} 0 & \textrm{if $Y$ is of unbounded
%variation,} \\ (c-\delta)^{-1} & \textrm{if $Y$ is of bounded variation.}
%\end{array} \right.
%%\\
%W^{(q)'} (0+) &:= \lim_{x \downarrow 0}W^{(q)'} (x) =
%\left\{ \begin{array}{ll}  \frac 2 {\sigma^2} & \textrm{if }\sigma > 0, \\
%\infty & \textrm{if }\sigma = 0 \; \textrm{and} \; \nu(-\infty,0) = \infty, \\
%\frac {q + \nu(-\infty, 0)} {c^2} &  \textrm{if }\sigma = 0 \; \textrm{and} \; \nu(-\infty, 0) < \infty.
%\end{array} \right.
\end{split}
\end{align*}
and a similar result holds for $\mathbb{W}^{(q)}$.
\end{enumerate}
\end{remark}

Using the results in
%P\'erez and Yamazaki
\cite{7.}, the expected NPV \eqref{v_pi} can be written as below.
%the following two lemmas. \red{[we can probably combine the following two and just give the formula for $v^b$?]}

\begin{lemma} \label{lemma_NPV} %\red{[combined the two lemmas]}
For $q > 0$, $b \geq 0$ and $x \in \R$, we have
\begin{align*}%\label{vf_withf}
v_b(x)
=&-\delta\overline{\mathbb{W}}^{(q)}(x-b) + \beta \Big(\overline{Z}^{(q)}(x) + \frac {\psi'(0+)} q \Big)\\
&+ \beta \delta\int_b^x\mathbb{W}^{(q)}(x-y)Z^{(q)}(y)\diff y \notag\\
&-
%\left( 1 -\beta Z^{(q)}(b)   - \beta q \int_0^{\infty}e^{-\varphi(q)y}W^{(q)}(y+b)\diff y \right)
\frac{f(b)}{ q}
%\frac{
\Big(Z^{(q)}(x)+q\delta\int_b^x\mathbb{W}^{(q)}(x-y)W^{(q)}(y)\diff y\Big),
\end{align*}
\begin{align*}
\hspace{-28mm}\text{where}\ \ f(b) :=   \frac{  \beta Z^{(q)}(b) -1  + \beta q \int_0^{\infty}e^{-\varphi(q)y}W^{(q)}(y+b)\diff y } { \varphi(q)  \int_0^{\infty}e^{-\varphi(q)y}W^{(q)}(y+b)\diff y}.
\end{align*}
\end{lemma}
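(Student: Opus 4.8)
The plan is to compute the two components of $v_b$ separately. Write $v_b(x) = D(x) - \beta I(x)$, where $D(x):=\mathbb{E}_x[\int_0^\infty e^{-qt}\diff L_t^{0,b}]$ is the expected discounted dividends and $I(x):=\mathbb{E}_x[\int_{[0,\infty[}e^{-qt}\diff R_t^{0,b}]$ the expected discounted capital injection. Both are additive functionals of the refracted-reflected process $U^{0,b}$, so the idea is to reduce each to a fluctuation identity of \cite{7.} and then rewrite everything through $W^{(q)}$, $\mathbb{W}^{(q)}$ and their (anti)derivatives. The same computation should cover both the bounded- and unbounded-variation cases, the only differences being the regularity of the scale functions recorded in Remark \ref{remark_scale_function_properties} and the explicit jump representation of $R^{0,b}$ quoted above (in the unbounded-variation case $R^{0,b}$ is instead the continuous regulator at zero).

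For the dividend term, $\diff L_t^{0,b}=\delta\mathbf{1}_{\{U_t^{0,b}>b\}}\diff t$ gives $D(x)=\delta\,\mathbb{E}_x[\int_0^\infty e^{-qt}\mathbf{1}_{\{U_t^{0,b}>b\}}\diff t]$, i.e.\ $\delta$ times the $q$-resolvent (occupation density) of $U^{0,b}$ integrated over $(b,\infty)$. First I would substitute the resolvent density of the refracted-reflected process from \cite{7.} and integrate over the spatial variable above $b$. The convolution terms $\int_b^x\mathbb{W}^{(q)}(x-y)(\cdots)\diff y$ in the statement are precisely what this integration produces, while the boundary term $\overline{\mathbb{W}}^{(q)}(x-b)$ arises from the dynamics above the threshold, where $U^{0,b}$ follows the refracted process $Y$.

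For the injection term I would invoke the companion identity of \cite{7.} for the expected discounted amount pushed in at the reflecting boundary at zero. Here Assumption \ref{assump_mean_finite} is essential: the relevant limit isolates the baseline reflection cost $\overline{Z}^{(q)}(x)+\psi'(0+)/q$, which is exactly the capital-injection cost of $X$ reflected at zero, and finiteness of this limit is guaranteed by $\psi'(0+)>-\infty$. The remaining refraction-dependent contributions combine into a scalar multiple of the ``$Z$-type'' function of the refracted process, $B(x):=Z^{(q)}(x)+q\delta\int_b^x\mathbb{W}^{(q)}(x-y)W^{(q)}(y)\diff y$.

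Finally I would collect the two contributions and simplify to the stated form, whose structure is a regular part minus $\tfrac{f(b)}{q}B(x)$. The scalar $f(b)$ is what enforces the correct behaviour of $v_b$ at the reflecting barrier $x=0$, and I would pin it down through the spatial Laplace transform $\int_0^\infty e^{-\varphi(q)y}W^{(q)}(y+b)\diff y$ (with $\varphi(q)$ the right inverse of $\psi_Y$) together with its $Z^{(q)}$-counterpart, which reproduces the ratio displayed in the statement. Identities \eqref{RLqp} and \eqref{RLqp2} are used throughout to interchange the $W^{(q)}$ and $\mathbb{W}^{(q)}$ convolutions and to recognise $\overline{\mathbb{W}}^{(q)}$ and $\overline{Z}^{(q)}$. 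I expect this consolidation to be the main obstacle: matching the raw output of the \cite{7.} identities to the compact closed form — in particular confirming that the coefficient of $B(x)$ is exactly $f(b)/q$ with the stated $f(b)$ — demands careful bookkeeping of the Laplace-transform constants and of the $\psi'(0+)/q$ limit produced by Assumption \ref{assump_mean_finite}.
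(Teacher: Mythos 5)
Your proposal takes essentially the same route as the paper: the paper's proof simply invokes Corollaries 4.4 and 5.5 of \cite{7.} for the expected discounted dividends and capital injections of the refracted-reflected process, and then uses the integration-by-parts identity $\int_0^{\infty}e^{-\varphi(q)y}Z^{(q)}(y+b)\diff y = \frac{Z^{(q)}(b)}{\varphi(q)} + \frac{q}{\varphi(q)}\int_0^{\infty}e^{-\varphi(q)y}W^{(q)}(y+b)\diff y$ to consolidate the constants into the stated $f(b)$ --- precisely the Laplace-transform bookkeeping you anticipate. Your extra step of deriving the dividend term by integrating the resolvent of $U^{0,b}$ over $]b,\infty[$ merely re-derives what those corollaries already supply, so the two arguments coincide in substance.
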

\begin{proof}
The result follows by Corollaries 4.4 and 5.5 of \cite{7.} and the fact that
\begin{align*}
\int_0^{\infty}e^{-\varphi(q)y}Z^{(q)}(y+b)\diff y = \frac {Z^{(q)}(b)} {\varphi(q)} + \frac q {\varphi(q)} \int_0^{\infty}e^{-\varphi(q)y}W^{(q)}(y+b)\diff y.
\end{align*}
\hfill\qed

\end{proof}

\begin{remark}\label{remark_when_b_zero} As \eqref{scale_function_laplace} gives $\int_0^{\infty}e^{-\varphi(q)y}W^{(q)}(y)\diff y = (\delta \varphi(q))^{-1}$ for the case when $b=0$, we have
\begin{align}
f(0) =   \frac{  \beta -1  + \beta q \int_0^{\infty}e^{-\varphi(q)y}W^{(q)}(y)\diff y } { \varphi(q)  \int_0^{\infty}e^{-\varphi(q)y}W^{(q)}(y)\diff y} =   \delta \Big(  \beta -1  +  \frac {\beta q} {\delta \varphi(q)} \Big). \label{f_0}
\end{align}
By this, \eqref{RLqp} and \eqref{RLqp2}, we get
\begin{align*}
v_0(x)
%&=-\delta\overline{\mathbb{W}}^{(q)}(x) + \beta \Big(\overline{Z}^{(q)}(x) + \frac {\psi'(0+)} q \Big)  + \beta \delta\int_0^x\mathbb{W}^{(q)}(x-y)Z^{(q)}(y)\diff y \notag\\
%&+\left( 1 -\beta    - \beta q \int_0^{\infty}e^{-\varphi(q)y}W^{(q)}(y)\diff y \right)\frac{Z^{(q)}(x)+q\delta\int_0^x\mathbb{W}^{(q)}(x-y)W^{(q)}(y)\diff y} { \varphi(q) q \int_0^{\infty}e^{-\varphi(q)y}W^{(q)}(y)\diff y} \\
=&-\delta\overline{\mathbb{W}}^{(q)}(x) + \beta \Big(\overline{Z}^{(q)}(x) + \frac {\psi'(0+)} q \Big)  + \beta \Big(\overline{\mathbb{Z}}^{(q)}(x)-\overline{Z}^{(q)}(x)+\delta\overline{\mathbb{W}}^{(q)}(x) \Big)\notag\\
&+\left( 1 -\beta    - \frac {\beta q} {\delta \varphi(q)}  \right)\frac{Z^{(q)}(x)+q \Big( \overline{\mathbb{W}}^{(q)}(x)-\overline{W}^{(q)}(x) \Big)} { q } \delta \\
%=&-\delta\overline{\mathbb{W}}^{(q)}(x) + \beta \frac {\psi'(0+)} q  + \beta \Big(\overline{\mathbb{Z}}^{(q)}(x)+\delta\overline{\mathbb{W}}^{(q)}(x) \Big)\\
% &+\frac \delta q \left( 1 -\beta    - \frac {\beta q} {\delta \varphi(q)}  \right) \mathbb{Z}^{(q)}(x) \\
%=& \frac \delta q  + \beta \Big( \frac {\psi'(0+)} q  + \overline{\mathbb{Z}}^{(q)}(x)+\delta\overline{\mathbb{W}}^{(q)}(x) -   \left( \frac \delta q    +  \frac {1} {\varphi(q)}  \right) \mathbb{Z}^{(q)}(x) \Big) \\
=& \frac \delta q  + \beta \Big( \frac {\psi'(0+)} q  + \overline{\mathbb{Z}}^{(q)}(x) - \frac \delta q -     \frac {1} {\varphi(q)}   \mathbb{Z}^{(q)}(x) \Big).
\end{align*}
%\red{[how about deleting the second to last instead?]}
%where the second equality holds because
%\begin{align*}
%\int_0^{\infty}e^{-\varphi(q)y}W^{(q)}(y)\diff y = (\delta \varphi(q))^{-1},
%\end{align*}

\end{remark}

\section{Selection of the Candidate Threshold} \label{section_threshold}
We shall choose the candidate threshold $b^*$ so that the corresponding expected NPV $v_{b^*}$ can be smooth at $b^*$. Lemma \ref{lemma_NPV} and
integration by parts imply that
\begin{align*}
&v_b(x)\\
=&-\delta\overline{\mathbb{W}}^{(q)}(x-b) + \beta \Big(\overline{Z}^{(q)}(x) + \frac {\psi'(0+)} q \Big)\\
  &+ \beta \delta\Big( Z^{(q)} (b) \overline{\mathbb{W}}^{(q)} (x-b) + q \int_b^x \overline{\mathbb{W}}^{(q)} (x-y) W^{(q)} (y) \diff y \Big)
 \notag\\
&-
\frac {f(b)} q
\Big[Z^{(q)}(x)+q\delta \Big( W^{(q)} (b) \overline{\mathbb{W}}^{(q)} (x-b) +  \int_b^x \overline{\mathbb{W}}^{(q)} (x-y) W^{(q) \prime} (y) \diff y \Big)\Big].
\end{align*}
By differentiating this, %and another application of Remark  \ref{remark_int_by_parts},
we have
\begin{align} \label{v_b_prime_kazu}
v_b'(x)
=&-\delta \mathbb{W}^{(q)}(x-b) + \beta Z^{(q)}(x)\nonumber \\
&+ \beta \delta  \Big[  Z^{(q)} (b) \mathbb{W}^{(q)} (x-b) + q \int_b^x \mathbb{W}^{(q)} (x-y) W^{(q)} (y) \diff y \Big]\nonumber \\
& {-f(b)}
\Big[ W^{(q)}(x)+  \delta \Big( W^{(q)} (b) \mathbb{W}^{(q)} (x-b) +  \int_b^x \mathbb{W}^{(q)} (x-y) W^{(q) \prime} (y) \diff y \Big) \Big],
\end{align}
which is continuous for $x \neq 0, b$.

In particular, for $x < b$, by \eqref{z_below_zero}, we get
\begin{align} \label{v_prime_below_b}
v_b'(x)
&= \beta Z^{(q)}(x)  -  W^{(q)}(x) f(b).
%\frac{  1 -\beta Z^{(q)}(b)   - \beta q \int_0^{\infty}e^{-\varphi(q)y}W^{(q)}(y+b)\diff y } { \varphi(q) q \int_0^{\infty}e^{-\varphi(q)y}W^{(q)}(y+b)\diff y}.
\end{align}
It follows that
\begin{align} \label{v_prime_diff}
v_b'(b+) - v_b'(b-)
&= \delta \mathbb{W}^{(q)}(0) g(b),
\end{align}
where
\begin{align} \label{g_f_connection}
\begin{split}
g(b) &:=  \beta   Z^{(q)} (b) - 1  - W^{(q)}(b) f(b)
%\frac{  1 -\beta Z^{(q)}(b)   - \beta q \int_0^{\infty}e^{-\varphi(q)y}W^{(q)}(y+b)\diff y } { \varphi(q)  \int_0^{\infty}e^{-\varphi(q)y}W^{(q)}(y+b)\diff y}
\\
%&= \beta  Z^{(q)} (b) -1  -\left( \beta Z^{(q)}(b) -1   \right) \frac{W^{(q)}(b) } { \varphi(q)  \int_0^{\infty}e^{-\varphi(q)y}W^{(q)}(y+b)\diff y} \\ &- \frac {\beta q W^{(q)}(b)} {\varphi(q)}.
&=\left( \beta Z^{(q)}(b) -1   \right) \Big( 1- \frac{W^{(q)}(b) } { \varphi(q)  \int_0^{\infty}e^{-\varphi(q)y}W^{(q)}(y+b)\diff y} \Big) - \frac {\beta q W^{(q)}(b)} {\varphi(q)}.
\end{split}
\end{align}
%\red{How about writing the last one
%\begin{align*}
%\left( \beta Z^{(q)}(b) -1   \right) \Big( 1- \frac{W^{(q)}(b) } { \varphi(q)  \int_0^{\infty}e^{-\varphi(q)y}W^{(q)}(y+b)\diff y} \Big) - \frac {\beta q W^{(q)}(b)} {\varphi(q)}
%\end{align*}
%}

%\red{[second line above probably not needed.]}
For the case of bounded variation, where $\mathbb{W}^{(q)}(0) > 0$ (see Remark \ref{remark_scale_function_properties}(ii)), it is straightforward to see that $v_b$ is continuously differentiable at $b$ if and only if $g(b) = 0$.

%
%we need
%\begin{align*}
%1
%&= \beta  Z^{(q)} (b)  +\left( 1 -\beta Z^{(q)}(b)   - \beta q \int_0^{\infty}e^{-\varphi(q)y}W^{(q)}(y+b)\diff y \right) \frac{W^{(q)}(b) } { \varphi(q)  \int_0^{\infty}e^{-\varphi(q)y}W^{(q)}(y+b)\diff y} \\
%&= \beta  Z^{(q)} (b)  +\left( 1 -\beta Z^{(q)}(b)    \right) \frac{W^{(q)}(b) } { \varphi(q)  \int_0^{\infty}e^{-\varphi(q)y}W^{(q)}(y+b)\diff y} - \frac {\beta q W^{(q)}(b)} {\varphi(q)}.
%\end{align*}
%It can be verified easily that this condition is the same as the twice-differentiability at $b$ for the case of unbounded variation.

For the case of unbounded variation, where $\mathbb{W}^{(q)}(0) = 0$,
%by taking another derivative and because
%\begin{align*}
%\red{\int_b^x\mathbb{W}^{(q)\prime}(x-y)W^{(q)}(y)\diff y=}
%- W^{(q)}(x) \mathbb{W}^{(q)}(0) + W^{(q)\prime} (b) \mathbb{W}^{(q)} (x-b) + \int_b^x \mathbb{W}^{(q)}(x-y) W^{(q)\prime} (y) \diff y,
%\end{align*}
%we have
%\begin{align*}
%v^{b \prime \prime}(x)
%&=-\delta \mathbb{W}^{(q) \prime}(x-b) + \beta q W^{(q)}(x) \\ &+ \beta \delta  \Big[  Z^{(q)} (b) \mathbb{W}^{(q) \prime} (x-b) + q \mathbb{W}^{(q)} (0) W^{(q)} (x) + q \int_b^x \mathbb{W}^{(q) \prime} (x-y) W^{(q)} (y) \diff y \Big] \notag\\
%&+ f(b)
%%\left( 1 -\beta Z^{(q)}(b)   - \beta q \int_0^{\infty}e^{-\varphi(q)y}W^{(q)}(y+b)\diff y \right)
%%\\ \times &\frac{
%\Big( W^{(q)\prime}(x)+  \delta \mathbb{W}^{(q)\prime}(x-b)W^{(q)}(b) + \delta\int_b^x\mathbb{W}^{(q)\prime}(x-y)W^{(q)\prime}(y)\diff y \Big).
%%} { \varphi(q) q \int_0^{\infty}e^{-\varphi(q)y}W^{(q)}(y+b)\diff y}.
%\end{align*}
by differentiating \eqref{v_b_prime_kazu}, we have, for $x \neq b$, that
\begin{align*}
v^{b \prime \prime}(x)
=& -\delta \mathbb{W}^{(q) \prime}(x-b) + \beta q W^{(q)}(x)\\
 &+ \beta \delta  \Big[  Z^{(q)} (b) \mathbb{W}^{(q)\prime} (x-b) + q \int_b^x \mathbb{W}^{(q)\prime} (x-y) W^{(q)} (y) \diff y \Big] \\
& {-f(b)}
\Big[ W^{(q)\prime}(x)+  \delta \Big( W^{(q)} (b) \mathbb{W}^{(q)\prime} (x-b)\\
&+ \int_b^x \mathbb{W}^{(q)\prime} (x-y) W^{(q) \prime} (y) \diff y \Big) \Big],
\end{align*}
which is continuous for $x \neq 0, b$ and hence $v^{b \prime \prime}(b+) - v^{b \prime \prime}(b-)
= \delta \mathbb{W}^{(q) \prime}(0+) g(b)$.

These observations, together with the smoothness of the scale function on $\R \backslash \{0\}$ as in Remark  \ref{remark_scale_function_properties}, are summarized as follows.
\begin{lemma} \label{lemma_smooth_fit} Suppose that there exists $b > 0$ such that $g(b) = 0$. Then, $v_b$ is continuously (resp.\ twice continuously) differentiable on $]0, \infty[$ when $X$ is of bounded (resp.\ unbounded) variation.
%the following:
%\begin{enumerate}
%\item when $X$ is of bounded variation, $v_b$ is continuously differentiable on $(0, \infty)$;
%\item when $X$ is of unbounded variation, $v_b$ is twice continuously differentiable on $(0, \infty)$.
%\end{enumerate}
\end{lemma}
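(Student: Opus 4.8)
The plan is to reduce the global smoothness of $v_b$ on $]0,\infty[$ to a single matching condition at the refraction level $x=b$, since everywhere else the regularity is inherited directly from that of the scale functions. Indeed, by Remark \ref{remark_scale_function_properties}(i) the functions $W^{(q)}$ and $\mathbb{W}^{(q)}$ are continuous on $\R$ and, away from the origin, enjoy the additional smoothness needed below. Inspecting the explicit expression \eqref{v_b_prime_kazu}, the first derivative $v_b'$ is a finite combination of these scale functions together with integral terms of the form $\int_b^x \mathbb{W}^{(q)}(x-y)(\cdots)\diff y$ whose upper limit and shift depend continuously on $x$; hence $v_b'$ is continuous on $]0,\infty[ \backslash \{b\}$. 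The only source of a possible discontinuity is the term $\mathbb{W}^{(q)}(x-b)$, which vanishes for $x<b$ (where the scale function is zero on the negative half-line) but tends to $\mathbb{W}^{(q)}(0)$ as $x\downarrow b$.

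First I would treat the bounded variation case. Comparing the one-sided limits of \eqref{v_b_prime_kazu} as $x\downarrow b$ with the value $v_b'(b-)=\beta Z^{(q)}(b)-W^{(q)}(b) f(b)$ read off from \eqref{v_prime_below_b} --- all integral terms vanishing in the limit and all scale functions being continuous --- yields precisely the jump \eqref{v_prime_diff}, namely $v_b'(b+)-v_b'(b-)=\delta\mathbb{W}^{(q)}(0) g(b)$ with $g$ as in \eqref{g_f_connection}. Since $\mathbb{W}^{(q)}(0)>0$ in the bounded variation regime by Remark \ref{remark_scale_function_properties}(ii), the hypothesis $g(b)=0$ forces this jump to vanish, so $v_b'$ is continuous across $b$ and therefore on all of $]0,\infty[$, giving the claimed $C^1$ property.

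For the unbounded variation case the same jump formula applies, but now $\mathbb{W}^{(q)}(0)=0$, so $v_b'$ is automatically continuous at $b$ irrespective of $g(b)$; the interesting matching is one order higher. I would then differentiate \eqref{v_b_prime_kazu} once more. Using that $W^{(q)}$ and $\mathbb{W}^{(q)}$ are $C^1(\R \backslash \{0\})$ here (Remark \ref{remark_scale_function_properties}(i)), the resulting expression for $v_b''$ is again continuous on $]0,\infty[ \backslash \{b\}$, and repeating the one-sided limit computation --- now the surviving boundary term being $\mathbb{W}^{(q)\prime}(x-b)\to\mathbb{W}^{(q)\prime}(0+)$ as $x\downarrow b$ --- produces the second-order jump $v_b''(b+)-v_b''(b-)=\delta\mathbb{W}^{(q)\prime}(0+) g(b)$. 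Again $g(b)=0$ annihilates this jump, so $v_b''$ is continuous across $b$, which upgrades $v_b$ to $C^2$ on $]0,\infty[$.

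The computations themselves are routine once the structure is in place; the one point demanding care --- effectively the main obstacle --- is justifying the continuity of $v_b'$ (and of $v_b''$) away from $b$ in the presence of the integral terms containing $W^{(q)\prime}$. These are well defined because $W^{(q)}$ is differentiable almost everywhere, and they define continuous functions of $x$ since integration against the upper limit smooths out any jumps of $W^{(q)\prime}$; moreover, restricting to $y\in[b,x]$ with $b>0$ keeps the argument of $W^{(q)\prime}$ bounded away from the origin, so the only genuine kink that can survive is the one carried by $\mathbb{W}^{(q)}(x-b)$ at $x=b$, which is exactly what the condition $g(b)=0$ is designed to remove.
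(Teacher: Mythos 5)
Your proposal is correct and follows essentially the same route as the paper: the paper also derives the first-order jump $v_b'(b+)-v_b'(b-)=\delta\mathbb{W}^{(q)}(0)g(b)$ from \eqref{v_b_prime_kazu} and \eqref{v_prime_below_b}, notes that $\mathbb{W}^{(q)}(0)>0$ in the bounded variation case makes $g(b)=0$ exactly the $C^1$-fit condition, and in the unbounded variation case differentiates \eqref{v_b_prime_kazu} once more to obtain the second-order jump $\delta\mathbb{W}^{(q)\prime}(0+)g(b)$, with smoothness away from $b$ inherited from the scale functions via Remark \ref{remark_scale_function_properties}. Your treatment of the continuity of the integral terms off the matching point is consistent with what the paper leaves implicit.
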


%\red{[the following two remarks moved here.]}
\begin{remark}[slope at $b$] \label{remark_slope_at_b}
The condition $g(b) = 0$ is equivalent to $v_b'(b-) = 1$, if $b > 0$.  Indeed, by \eqref{v_prime_below_b}, \eqref{v_prime_diff}, and \eqref{g_f_connection}, we have
\begin{align*}
v_b'(b-) &= 1+g(b), \quad b > 0, \\
%&= \beta Z^{(q)}(b)  +\left( 1 -\beta Z^{(q)}(b)   - \beta q \int_0^{\infty}e^{-\varphi(q)y}W^{(q)}(y+b)\diff y \right) \frac{q W^{(q)}(b) } { \varphi(q) q \int_0^{\infty}e^{-\varphi(q)y}W^{(q)}(y+b)\diff y}.
v_b'(b+) &=  1 + (1+\delta \mathbb{W}^{(q)}(0)) g(b),
%\\ &=  \left\{ \begin{array}{ll}1+\frac c {c-\delta} g(b) & \textrm{if $X$ is of bounded variation,} \\ 1 + g(b) & \textrm{if $X$ is of unbounded variation,} \end{array}  \right.
\quad b \geq 0.
\end{align*}

%\blue{[Delete the parenthesis above? I think it is clear if we just leave $v_b'(b+)=  1 + (1+\delta \mathbb{W}^{(q)}(0)) g(b)$.]}
\end{remark}

\begin{remark}[Continuity/smoothness at zero] \label{remark_smoothness_zero}
(i) %We have $v_{b^*}(0+) =v_{b^*}(0-)$
By Lemma \ref{lemma_NPV}, we have that $v_{b}$ is continuous at zero for $b \geq 0$.

(ii) If $b > 0$, \eqref{v_prime_below_b} gives $v_{b}'(0+)
= \beta  -  W^{(q)}(0) f(b) = \beta = v_{b}'(0-)$ for the case of unbounded variation.

%
%% \red{[added this because we need some smoothness at zero.]}
%\red{Note that $v_b' (0-) = \beta$ for all $b \geq 0$.}
%When $b > 0$, we have
%\begin{align*}
%v_b'(0+)
%&= \beta  +  W^{(q)}(0) f(b) = \beta = v_b'(0-).
%\end{align*}
%%where
%%\begin{align*}
%%f(b) :=  \frac{  1 -\beta Z^{(q)}(b)   - \beta q \int_0^{\infty}e^{-\varphi(q)y}W^{(q)}(y+b)\diff y } { \varphi(q) q \int_0^{\infty}e^{-\varphi(q)y}W^{(q)}(y+b)\diff y}.
%%\end{align*}
%(i) Hence we see that for the unbounded variation case we $v_b$ is continuously differentiable at zero.
%
%(ii)
%%By the definition of $g(b)$, we also have
%\red{By \eqref{g_f_connection}, we have}
%\begin{align*}
%f(b) = \frac {g(b) - \beta Z^{(q)}(b) + 1} {W^{(q)}(b)}.
%\end{align*}
%In particular for $b^* > 0$ (such that $g(b^*) = 0$),
%\begin{align*}
%f(b^*) = \frac { - \beta Z^{(q)}(b^*) + 1} {W^{(q)}(b)} < 0.
%\end{align*}
%Hence $\beta = v_{b^*}'(0-) > v_{b^*}' (0+)$.
%
%(iii) Suppose $b^* = 0$.  Then, \red{by \eqref{v_zero_zero_prime}, } \begin{align*}
%v_0'(0+)
%&= \beta \Big(  1 -     \frac {q} {\varphi(q)}   \mathbb{W}^{(q)}(0) \Big) \leq \beta.
%\end{align*}
%\red{I guess this holds for any $b$?}\green{Agreed.}
\end{remark}

Let us define our candidate threshold by
\begin{equation}\label{defbthreshold}
b^* := \inf \{ b \geq 0:
%\beta  Z^{(q)} (b) -1  +\left( 1 -\beta Z^{(q)}(b)   - \beta q \int_0^{\infty}e^{-\varphi(q)y}W^{(q)}(y+b)\diff y \right) \\ \times \frac{W^{(q)}(b) } { \varphi(q)  \int_0^{\infty}e^{-\varphi(q)y}W^{(q)}(y+b)\diff y} = 0
g(b) \leq 0 \},
\end{equation}
with the convention that $\inf \varnothing = \infty$.

\begin{lemma} \label{lemma_criteria_zero}
%\red{[made a lemma here]}
We have $b^* = 0$ if and only if $X$ is of bounded variation and
\begin{align}
\beta  -1  +\left( 1 -\beta    -  \frac {\beta q} {\delta \varphi(q)} \right)  \frac \delta c  \leq 0. \label{criteria_b_0}
\end{align}
\end{lemma}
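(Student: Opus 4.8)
The plan is to reduce the statement entirely to the sign of $g$ at the origin, and then evaluate $g(0)$ explicitly. By the definition \eqref{defbthreshold}, $b^* = \inf\{b \geq 0 : g(b) \leq 0\}$, so the natural first step is to prove the equivalence $b^* = 0 \iff g(0) \leq 0$. For this I would first observe that $g$ is right-continuous at $0$: in the first form of \eqref{g_f_connection}, $Z^{(q)}$ is continuous and $W^{(q)}$ admits the right-limit recorded in Remark~\ref{remark_scale_function_properties}(ii), while $f$ is continuous in $b \geq 0$ since the integral $\int_0^\infty e^{-\varphi(q)y} W^{(q)}(y+b)\,\diff y$ depends continuously on $b$ and (by Remark~\ref{remark_when_b_zero}) does not vanish at $b=0$. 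Given right-continuity, if $g(0) \leq 0$ then $0$ itself lies in the set $\{b \geq 0 : g(b) \leq 0\}$, forcing $b^* = 0$; conversely, if $g(0) > 0$, right-continuity yields some $\varepsilon > 0$ with $g > 0$ on $[0,\varepsilon)$, so $b^* \geq \varepsilon > 0$. This settles the reduction.

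Next I would compute $g(0)$ using the first expression in \eqref{g_f_connection}. Since $Z^{(q)}(0) = 1$ by \eqref{z_below_zero}, we have $g(0) = \beta - 1 - W^{(q)}(0)\,f(0)$. By Remark~\ref{remark_scale_function_properties}(ii), $W^{(q)}(0) = 0$ in the unbounded-variation case and $W^{(q)}(0) = c^{-1}$ in the bounded-variation case. In the unbounded-variation case this gives $g(0) = \beta - 1 > 0$ (as $\beta > 1$), so by the reduction $b^* > 0$; this already shows that $b^* = 0$ can only occur under bounded variation. In the bounded-variation case I would substitute the value of $f(0)$ from \eqref{f_0}, obtaining
\[
g(0) = (\beta - 1) - \frac{\delta}{c}\Big( \beta - 1 + \frac{\beta q}{\delta \varphi(q)}\Big),
\]
which rearranges, upon factoring $-(\beta - 1 + \tfrac{\beta q}{\delta\varphi(q)}) = (1 - \beta - \tfrac{\beta q}{\delta\varphi(q)})$, precisely into the left-hand side of \eqref{criteria_b_0}. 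Hence in the bounded-variation case $g(0) \leq 0$ is equivalent to \eqref{criteria_b_0}.

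Combining the two cases with the reduction then yields the claim: $b^* = 0$ holds if and only if $X$ is of bounded variation and \eqref{criteria_b_0} is satisfied. The computations here are routine substitutions, so I expect the only genuinely technical point to be the equivalence $b^* = 0 \iff g(0) \leq 0$, which rests on the right-continuity of $g$ at the origin and on the fact that the defining set lives in $[0,\infty[$; everything downstream is algebraic rearrangement using the already-established values of $Z^{(q)}(0)$, $W^{(q)}(0)$, and $f(0)$.
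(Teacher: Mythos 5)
Your proposal is correct and takes essentially the same route as the paper: reduce the claim to the sign of $g(0)$, then evaluate $g(0)=\beta-1-W^{(q)}(0)f(0)$ using \eqref{f_0} and Remark \ref{remark_scale_function_properties}(ii), splitting into the unbounded- and bounded-variation cases. The only difference is that you explicitly justify the equivalence $b^*=0 \iff g(0)\leq 0$ via right-continuity of $g$ at the origin, a point the paper asserts directly from the definition \eqref{defbthreshold}; this is a welcome (if minor) tightening rather than a different argument.
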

\begin{proof}
%\red{[can you check if I have oversimplified this proof?]}
By the definition of $b^*$ as in \eqref{defbthreshold}, we have that $b^* = 0$ if and only if $g(0) \leq 0$, where $g(0)= \beta  -1  +\left( 1 -\beta    -  \frac {\beta q} {\delta \varphi(q)} \right)  \delta W^{(q)}(0)$ by \eqref{f_0} and \eqref{g_f_connection}.

For the case of unbounded variation (where $W^{(q)}(0) = 0$), $g(0) = \beta - 1 > 0$ and hence $b^* > 0$. On the other hand, for the case of bounded variation, by Remark  \ref{remark_scale_function_properties}(ii), $b^* = 0$ if and only if \eqref{criteria_b_0} holds.\hfill\qed
%\begin{align*}
% \beta -1  +\left( 1 -\beta    -  \frac {\beta q} {\delta \varphi(q)} \right)  \frac \delta c  \leq 0.
%\end{align*}
\end{proof}

To continue, we can further show that $b^* < \infty$.
%\red{[combined a lemma and remark below]}
\begin{lemma} \label{g_probabilisitic}
(i) Define $h(b) :=  1- \frac{W^{(q)}(b) } { \varphi(q)  \int_0^{\infty}e^{-\varphi(q)y}W^{(q)}(y+b)\diff y}, \quad b \geq 0$. Then, for $b \geq 0$,
\begin{align} \label{eqn_frac_g_h}
\frac {g(b)} {h(b)} %&=
% \beta  Z^{(q)} (b)-1  - \frac {\beta q} {\varphi(q)} \frac {W^{(q)}(b)} {h(b)} \\
 &=  \beta  Z^{(q)} (b)-1  -  {\beta q}   {W^{(q)}(b)}
  \frac {   \int_0^{\infty}e^{-\varphi(q)y}W^{(q)}(y+b)\diff y}  {  \int_0^{\infty}e^{-\varphi(q)y}W^{(q)\prime}(y+b)  \diff y  }\nonumber\\
  &= \beta\E_b \left[e^{-q \kappa_0^{b,-}}1_{\{\kappa_0^{b,-} <\infty\}}\right] - 1,
\end{align}
where $\kappa_0^{b,-}:=\inf\{t>0:U_t^{b}<0\}$ and $U^{b}$ is the refracted \lev process of {\normalfont\cite{12.}}, which is the unique strong solution to the stochastic differential equation
%\[
$U_t^{b}=X_t-\delta\int_0^t1_{\{U_s^{b}>b\}} \diff s$, for  $t \geq 0$.
%\]
\\
(ii)
%$\lim_{b \rightarrow \infty}g(b) / h(b) = -1$.  Hence
We have $0 \leq b^* < \infty$.
\end{lemma}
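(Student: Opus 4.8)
The plan is to handle the two equalities in (i) separately and then read off (ii) from the probabilistic representation they furnish. Throughout I write $A(b) := \int_0^\infty e^{-\varphi(q)y} W^{(q)}(y+b)\,\diff y$, so that the definition of $h$ becomes $h(b) = (\varphi(q)A(b) - W^{(q)}(b))/(\varphi(q)A(b))$.

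For the first equality I would argue by pure algebra starting from the factored expression for $g$ in \eqref{g_f_connection}, namely $g(b) = (\beta Z^{(q)}(b) - 1)\,h(b) - \beta q W^{(q)}(b)/\varphi(q)$. Dividing by $h(b)$ gives
\[
\frac{g(b)}{h(b)} = \beta Z^{(q)}(b) - 1 - \frac{\beta q W^{(q)}(b)}{\varphi(q)\,h(b)} = \beta Z^{(q)}(b) - 1 - \frac{\beta q W^{(q)}(b)\,A(b)}{\varphi(q)A(b) - W^{(q)}(b)}.
\]
The only genuine input is the identity $\varphi(q)A(b) - W^{(q)}(b) = \int_0^\infty e^{-\varphi(q)y} W^{(q)\prime}(y+b)\,\diff y$, which follows by integrating by parts and using that the boundary term $e^{-\varphi(q)y}W^{(q)}(y+b)$ vanishes as $y\to\infty$; this is legitimate because $\varphi(q) > \Phi(q)$ (since $\psi(\varphi(q)) = q + \delta\varphi(q) > q = \psi(\Phi(q))$) dominates the $e^{\Phi(q)\cdot}$ growth of $W^{(q)}$. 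Substituting produces exactly the first displayed formula.

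The second equality is the crux. Here I would recall the first‑passage (ruin‑time) identity for the refracted \lev process $U^b$ from \cite{12.}, specialize the starting point to the refraction level $x=b$, and check that it reduces to $Z^{(q)}(b) - qW^{(q)}(b)A(b)/(\varphi(q)A(b)-W^{(q)}(b))$. Evaluating at $x=b$ is advantageous: the terms carrying $\mathbb{W}^{(q)}(x-b)$ and $\mathbb{Z}^{(q)}(x-b)$ collapse (via $\mathbb{Z}^{(q)}(0)=1$, $\overline{\mathbb{W}}^{(q)}(0)=0$, and Remark \ref{remark_scale_function_properties}(ii)), so the \cite{12.} formula simplifies to an expression involving only the $X$–scale functions at $b$; matching it with the claimed right-hand side is then an algebraic verification using $A(b)$ and the integration‑by‑parts identity above. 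I expect this identification to be the main obstacle, as it requires pinning down the precise form of the Kyprianou–Loeffen ruin formula and carrying out the boundary simplification at $b$ correctly.

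For (ii), $b^* \geq 0$ is immediate from the convention $\inf\varnothing = \infty$ in \eqref{defbthreshold}. To prove $b^* < \infty$ I would combine two observations. First, $h(b) > 0$ for every $b\geq 0$: the integration‑by‑parts identity gives $\varphi(q)A(b) - W^{(q)}(b) = \int_0^\infty e^{-\varphi(q)y}W^{(q)\prime}(y+b)\,\diff y > 0$ because $W^{(q)}$ is strictly increasing on $]0,\infty[$, hence $h(b) = (\varphi(q)A(b)-W^{(q)}(b))/(\varphi(q)A(b)) > 0$. Second, the ruin Laplace transform vanishes as $b\to\infty$: from $U_t^b = X_t - \delta\int_0^t 1_{\{U_s^b > b\}}\diff s \geq X_t - \delta t = Y_t$ pathwise we get $\kappa_0^{b,-} \geq \tau_0^-$, where $\tau_0^-$ is the first passage of $Y$ below $0$ started at $b$, so $\E_b[e^{-q\kappa_0^{b,-}}1_{\{\kappa_0^{b,-}<\infty\}}] \leq \E_b[e^{-q\tau_0^-}1_{\{\tau_0^-<\infty\}}] \leq \p_b(\tau_0^- \leq t) + e^{-qt}$; letting $b\to\infty$ (the probability tends to $0$ since $\inf_{s\leq t}Y_s$ is a.s.\ finite) and then $t\to\infty$ shows the bound tends to $0$. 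Combining these with the representation in (i), $g(b) = h(b)\big(\beta\,\E_b[e^{-q\kappa_0^{b,-}}1_{\{\kappa_0^{b,-}<\infty\}}] - 1\big) < 0$ for all sufficiently large $b$, whence $b^* < \infty$.
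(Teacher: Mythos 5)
Your proposal is correct and follows essentially the same route as the paper: the first equality by dividing the factored form \eqref{g_f_connection} of $g$ by $h(b)$ and using the integration-by-parts identity for $h(b)^{-1}$, the second by invoking Theorem 5(ii) of \cite{12.} (which, evaluated at the starting point $x=b$, is exactly the displayed expression), and (ii) by positivity of $h$ together with the pathwise domination $U^{b}\geq Y$, which forces the ruin Laplace transform, and hence $g/h+1$, to vanish as $b\to\infty$. The step you flag as the main obstacle is handled in the paper by a direct citation of the Kyprianou--Loeffen identity, which is precisely the specialization at $x=b$ you describe, so there is no gap.
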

\begin{proof}
(i) We have, by \eqref{g_f_connection}, that
\begin{align} \label{g_transform}
\begin{split}
g(b)
%&= \beta  Z^{(q)} (b) -1  +\left( 1 -\beta Z^{(q)}(b)    \right) \frac{W^{(q)}(b) } { \varphi(q)  \int_0^{\infty}e^{-\varphi(q)y}W^{(q)}(y+b)\diff y} - \frac {\beta q W^{(q)}(b)} {\varphi(q)} \\
&= (\beta  Z^{(q)} (b)-1 )  h(b) - \frac {\beta q} {\varphi(q)} W^{(q)}(b).
\end{split}
\end{align}
%\red{[delete the first equality above?]}\green{Agreed.}
On the other hand, by integration by parts,
\begin{align} \label{h_inverse}
\begin{split}
h(b)^{-1}
%&= \varphi(q)
% \left[ \varphi(q)- \frac{W^{(q)}(b) } {   \int_0^{\infty}e^{-\varphi(q)y}W^{(q)}(y+b)\diff y} \right]^{-1} \\
%&=
%\varphi(q) \frac {   \int_0^{\infty}e^{-\varphi(q)y}W^{(q)}(y+b)\diff y}  { \varphi(q) \int_0^{\infty}e^{-\varphi(q)y}W^{(q)}(y+b) \diff y - W^{(q)}(b)   }  \\
&=  \varphi(q)
  \frac {   \int_0^{\infty}e^{-\varphi(q)y}W^{(q)}(y+b)\diff y}  {  \int_0^{\infty}e^{-\varphi(q)y}W^{(q)\prime}(y+b)  \diff y  }.
\end{split}
\end{align}
%\red{[remove the first equality above?]}\green{Agreed.}
Therefore, due to \eqref{g_transform} and \eqref{h_inverse}, we obtain the first equality of \eqref{eqn_frac_g_h}.
The second equality of \eqref{eqn_frac_g_h} holds by Theorem 5 (ii) of \cite{12.}.

(ii) For $b \geq 0$, because $W^{(q)}$ is strictly increasing on $[0, \infty[$, we get
%and by \eqref{laplace_W_varphi}, \green{Does it follows because of \eqref{laplace_W_varphi}? or just by an integration by parts and the fact that $\int_0^{\infty}e^{-\varphi(q)y}\diff y=\varphi(q)^{-1}$?}
\begin{align*}
h(b) =  1- \frac{W^{(q)}(b) } { \varphi(q)  \int_0^{\infty}e^{-\varphi(q)y}W^{(q)}(y+b)\diff y} > 1- \frac{W^{(q)}(b) } { \varphi(q)  \int_0^{\infty}e^{-\varphi(q)y}W^{(q)}(b)\diff y} = 0.
\end{align*}
Using the fact that $U^b\geq Y$ and hence that $\kappa_0^{b,-}$ is dominated from below by the down-crossing time of $Y$, we have the convergence $\E_b [e^{-q \kappa_0^{b,-}}1_{\{\kappa_0^{b,-} <\infty\}}] \rightarrow 0$ as $b \rightarrow \infty$. This and \eqref{eqn_frac_g_h} imply that $\lim_{b \rightarrow \infty}g(b) / h(b) = -1$. Hence, by the positivity of $h$, $g(b)$ must be negative for a sufficiently large $b$. Consequently, it follows that $0 \leq b^* < \infty$.\hfill\qed
\end{proof}

%\red{[The two remarks above do not depend on the choice of $b^*$.  How about moving these to immediately after Lemma \ref{lemma_smooth_fit}?]} \blue{[I agree that we make these two remarks after Lemma \ref{lemma_smooth_fit}]}.\green{Agreed.}

\section{Verification of Optimality} \label{section_verification}
In this section, we provide a rigorous verification argument for the choice of $b^{\ast}$ defined in $(\ref{defbthreshold})$ such that the value function of the stochastic control problem $(\ref{control:value})$ can be achieved.

With the selected barrier $b^*$, by Lemma \ref{lemma_NPV}, our value function becomes
%Assume first $b^* \in (0, \infty)$, we have
\begin{align}\label{vf_new_label}
\begin{split}
v_{b^*}(x)
=&-\delta\overline{\mathbb{W}}^{(q)}(x-b^*) + \beta \Big(\overline{Z}^{(q)}(x) + \frac {\psi'(0+)} q \Big)  \\ &+ \beta \delta\int_{b^*}^x\mathbb{W}^{(q)}(x-y)Z^{(q)}(y)\diff y \\
&- \frac{f(b^*)}{q} \Big(Z^{(q)}(x)+q\delta\int_{b^*}^x\mathbb{W}^{(q)}(x-y)W^{(q)}(y)\diff y \Big).
\end{split}
\end{align}
Here, for the case $b^* > 0$,  because $g(b^*) = 0$, by \eqref{g_f_connection} and \eqref{eqn_frac_g_h}, we derive that
\begin{align} \label{b_star_relation}
f(b^*)
% \frac{1 -\beta Z^{(q)}(b^*)   - \beta q \int_0^{\infty}e^{-\varphi(q)y}W^{(q)}(y+b^*)\diff y } { \varphi(q)  \int_0^{\infty}e^{-\varphi(q)y}W^{(q)}(y+b^*)\diff y}
 = \frac {\beta  Z^{(q)} (b^*) -1}{W^{(q)}(b^*)} = \beta q
  \frac {   \int_0^{\infty}e^{-\varphi(q)y}W^{(q)}(y+b^*)\diff y}  {  \int_0^{\infty}e^{-\varphi(q)y}W^{(q)\prime}(y+b^*)  \diff y  }.
\end{align}
For the case $b^* = 0$, $v_{b^*} = v_0$ is already given in Remark \ref{remark_when_b_zero}.

Our goal is to prove the main result of this paper given below.
\begin{theorem} \label{main_theorem}The strategy $\pi^{b^*}$ is optimal and the value function of the stochastic control problem $(\ref{control:value})$ is given by $v = v_{b^*}$.
\end{theorem}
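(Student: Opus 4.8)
The plan is to prove optimality via the standard verification lemma for singular/absolutely-continuous stochastic control: I will show that the candidate value function $w := v_{b^*}$ dominates $v_\pi$ for every admissible $\pi$, while being attained by $\pi^{b^*}$. Since $v_{b^*} = v_{\pi^{b^*}}$ by construction, it suffices to prove the inequality $w \geq v$, i.e. that $w$ is a supersolution of the Hamilton--Jacobi--Bellman variational inequality associated with \eqref{control:value}. The relevant set of variational inequalities is
\begin{align*}
\max\Bigl\{ (\mathcal{L} - q) w(x) + \delta\bigl(1 - w'(x)\bigr),\; (\mathcal{L} - q) w(x) \Bigr\} &\leq 0, \quad x > 0,\\
w'(x) &\geq -\beta,\quad x \geq 0,
\end{align*}
together with the reflection condition at zero (which reads $w'(0+) = \beta$ for unbounded variation, established already in Remark \ref{remark_smoothness_zero}(ii)), where $\mathcal{L}$ denotes the infinitesimal generator of $X$. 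First I would invoke the smooth-fit properties: Lemma \ref{lemma_smooth_fit} guarantees $w \in C^1]0,\infty[$ (resp.\ $C^2$) in the bounded (resp.\ unbounded) variation case, so that It\^o's formula can be applied to $e^{-qt} w(U_t^\pi)$ for an arbitrary controlled process, producing the decomposition needed to compare against any admissible strategy.

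The core of the argument splits into the two regimes separated by $b^*$. For $x > b^*$ (the dividend-paying region), I expect to verify that $(\mathcal{L} - q) w(x) + \delta(1 - w'(x)) = 0$ holds by the very definition of the refracted-reflected dynamics, while the off-diagonal inequality $(\mathcal{L}-q) w(x) \leq 0$ must be checked; simultaneously one needs $w'(x) \leq 1$ there so that paying dividends at the maximal rate $\delta$ is indeed favoured (the bang-bang structure). For $0 < x < b^*$ (the waiting region) the roles reverse: $(\mathcal{L}-q) w(x) = 0$ should hold and one must establish $w'(x) \geq 1$ so that it is optimal not to pay dividends. Below $b^*$ the explicit formula \eqref{v_prime_below_b}, combined with $g(b^*) = 0$ and the relation \eqref{b_star_relation}, should yield $w'(x) = \beta Z^{(q)}(x) - W^{(q)}(x) f(b^*) \geq 1$ by monotonicity of the scale functions, and the generator identity on $(0,b^*)$ follows from the standard fact that $x \mapsto \beta Z^{(q)}(x) - \cdots$ is $q$-harmonic for $X$ below the barrier. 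The lower bound $w' \geq -\beta$ guaranteeing admissibility of capital injection must also be confirmed throughout.

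The hard part, as the introduction explicitly warns, is the slope condition $w'(x) \leq 1$ in the dividend region $x > b^*$, uniformly. This is precisely where the spectrally negative case is delicate, because negative jumps can carry the surplus from above $b^*$ directly into the waiting region or even below zero, and a naive estimate on $w'$ need not stay below $1$. The key device the paper has set up is the probabilistic identity \eqref{eqn_frac_g_h}: the slope of the candidate value function above the threshold coincides (up to the $g/h$ normalisation) with $\beta\,\E_b[e^{-q\kappa_0^{b,-}} 1_{\{\kappa_0^{b,-}<\infty\}}] - 1$, the discounted Laplace transform of the ruin time of the refracted L\'evy process $U^b$ of \cite{12.}. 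Since $\kappa_0^{b,-}$ is stochastically monotone in the starting point $b$ (larger $b$ means later ruin, shown in Lemma \ref{g_probabilisitic}(ii) via the domination $U^b \geq Y$), this Laplace transform is monotone decreasing in $b$ and bounded by $1$, which should translate into $w'(x) \leq 1$ for all $x > b^*$. I would make this rigorous by expressing $w'(x)$ for $x > b^*$ through \eqref{v_b_prime_kazu} evaluated at $b = b^*$, identifying the relevant combination of scale functions with the ruin-time Laplace transform via Theorem~5(ii) of \cite{12.}, and exploiting monotonicity; this circumvents any assumption on the L\'evy measure, which is the whole point of the theorem.

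Once the variational inequalities and the slope/admissibility bounds are in place, the verification concludes by a standard optional-stopping and monotone-convergence argument: applying It\^o to $e^{-qt} w(V_t^\pi)$ for arbitrary admissible $\pi$, using the supersolution property to bound the drift, the boundary behaviour at zero to control the reflection term, and the bound $w'\geq -\beta$ to control the capital-injection term, one obtains $w(x) \geq v_\pi(x)$ for all $\pi \in \mathcal{A}$, with equality for $\pi^{b^*}$. I would also need to handle the integrability/transversality at infinity (that the discounted value vanishes in expectation as $t\to\infty$), which follows from Assumption \ref{assump_mean_finite} and the asymptotics of the scale functions. The only genuinely novel obstacle is the uniform slope bound; the remaining steps are routine applications of the fluctuation-theoretic toolkit and the analytic properties of $W^{(q)}$, $\mathbb{W}^{(q)}$ collected in Remark \ref{remark_scale_function_properties}.
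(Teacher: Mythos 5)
Your overall strategy --- verification lemma, generator identities on the two regions, and the probabilistic identification of the slope with $\beta\,\E_x\big[e^{-q\kappa_0^{b^*,-}}1_{\{\kappa_0^{b^*,-}<\infty\}}\big]$ via Theorem~5(ii) of \cite{12.} --- is exactly the paper's route (equations \eqref{v_b_prime_kazu}, \eqref{b_star_relation}, \eqref{v_der_2}, and Lemmas \ref{generator_lemma} and \ref{lemma_slope}). However, two concrete gaps remain.

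First, you misstate the capital-injection part of the variational inequality: you require $w'(x) \geq -\beta$, but the condition that actually controls the term $-\beta\int e^{-qs}\,\diff R_s^{\pi}$ in the It\^o/comparison argument is the \emph{upper} bound $w'(x)\leq\beta$ (the second condition in Lemma \ref{verificationlemma}). Since any candidate value function here is nondecreasing, your stated condition is vacuous, and with it alone the verification does not close: at a point where $w'>\beta$, injecting capital would be strictly profitable, and the inequality $w(x)\geq v_\pi(x)$ cannot be deduced. The fix is available inside your own machinery --- the identity \eqref{v_der_2} gives $w'(x)=\beta\,\E_x[\,\cdot\,]\leq\beta$ --- but it must be stated and used as an upper bound. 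Relatedly, your justification of $w'\geq 1$ on $]0,b^*[$ ``by monotonicity of the scale functions'' is not a proof: $\beta Z^{(q)}(x)-f(b^*)W^{(q)}(x)$ is a difference of increasing functions, so its lower bound is not obvious. The paper instead uses that the identity \eqref{v_der_2} holds for \emph{all} $x>0$ (not only $x>b^*$), that the ruin-time Laplace transform is non-increasing in the starting point, and that $v_{b^*}'(b^*)=1$ by $g(b^*)=0$ (Remark \ref{remark_slope_at_b}); the same argument then delivers both $w'\geq 1$ below $b^*$ and $w'\leq 1$ above it.

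Second, your argument covers only the case $b^*>0$. The identity \eqref{v_der_2} rests on \eqref{b_star_relation}, which is equivalent to $g(b^*)=0$; when $b^*=0$ (the bounded-variation case satisfying \eqref{criteria_b_0}, see Lemma \ref{lemma_criteria_zero}) one may have $g(0)<0$ strictly, so the identification of $w'$ with the ruin-time Laplace transform of the refracted process breaks down. The paper treats this case by a separate argument (Step (ii) of the proof of Lemma \ref{lemma_slope}): concavity of $v_0$, obtained from the monotonicity of $\mathbb{W}^{(q)\prime}(x+)/\mathbb{W}^{(q)}(x)$, combined with $v_0'(0+)=1+(1+\delta\mathbb{W}^{(q)}(0))g(0)\leq 1$ and $v_0'(x)\to 0$ as $x\to\infty$. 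Without some substitute for this step, your proof is incomplete precisely in the regime where it is optimal to always pay dividends.
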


%\red{[$h$ is used and so I changed to $F$.  OK?}
Let $\mathcal{L}$ be the infinitesimal generator associated with
the process $X$ applied to a $C^1$ (resp.\ $C^2$) function $F$ for the case where $X$ is of bounded (resp.\ unbounded) variation, i.e., for $x \in \R$,
\begin{align*} %\label{generator}
\begin{split}
\mathcal{L} F(x) &:= \gamma F'(x) + \frac 1 2 \sigma^2 F''(x) \\ &+ \int_{(-\infty,0)} \left[ F(x+z) - F(x) -  F'(x) z 1_{\{-1 < z < 0\}} \right] \Pi(\diff z). %\quad x \in \R.
\end{split}
\end{align*}
Further, let $\mathcal{L}_Y$ be that of $Y_t := X_t - \delta t$. We have $\mathcal{L}_Y F(x) = \mathcal{L} F(x) - \delta F'(x)$.

To show the optimality, it suffices to verify variational inequalities. The proof of the next lemma is omitted as it is essentially the same as the spectrally positive case in Lemma 4.2 of \cite{5.}.
Here we slightly relax the assumption on the smoothness at zero, which can be achieved by applying the Meyer-It\^o formula as in Theorem 4.71 of \cite{15.}. We refer to \cite{1.,6.,16.,17.} for other stochastic control problems and verification lemmas with spectrally one-sided \lev processes.

% The only difference is that we need to relax the assumption about the smoothness of the function at zero TODO.
\begin{lemma}[Verification lemma]
\label{verificationlemma}
Suppose $\hat{\pi} \in \mathcal{A}$ %is an admissible dividend strategy
such that $v_{\hat{\pi}}$ is sufficiently smooth on $]0,\infty[$, continuous on $\R$, and, for the case of unbounded variation, continuously differentiable at zero. In addition, we assume that %it satisfies
\begin{align}
\label{HJB-inequality}
\sup_{0\leq r\leq\delta} \big((\mathcal{L} - q)v_{\hat{\pi}}(x)-rv'_{\hat{\pi}}(x)+r \big) &\leq 0, \quad x > 0, \notag\\
v'_{\hat{\pi}}(x)&\leq\beta, \quad x > 0, \\
\inf_{x\geq 0}v_{\hat{\pi}}(x) &> -m,\qquad\text{for some $m>0$.}\notag
\end{align} %\red{changed to $m$ because it would be confusing with the martingale $M$.}
Then, $v_{\hat{\pi}}(x)=v(x)$ for all $x\geq0$, and hence, $\hat{\pi}$ is an optimal strategy.
\end{lemma}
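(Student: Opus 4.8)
The plan is to invoke the verification lemma, Lemma \ref{verificationlemma}, with $\hat\pi = \pi^{b^*}$ and $v_{\hat\pi} = v_{b^*}$ as in \eqref{vf_new_label}; the whole problem then reduces to checking that lemma's hypotheses. Regularity is free: when $b^* > 0$ we have $g(b^*) = 0$ by the definition \eqref{defbthreshold} and Lemma \ref{g_probabilisitic}, so Lemma \ref{lemma_smooth_fit} yields that $v_{b^*}$ is $C^1$ (bounded variation) or $C^2$ (unbounded variation) on $]0,\infty[$, while Remark \ref{remark_smoothness_zero} gives continuity on $\R$ and, in the unbounded-variation case, $v_{b^*}'(0+) = \beta = v_{b^*}'(0-)$; the degenerate case $b^* = 0$ (possible only for bounded variation, by Lemma \ref{lemma_criteria_zero}) uses the closed form of $v_0$ from Remark \ref{remark_when_b_zero}. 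It thus remains to verify the Hamilton--Jacobi--Bellman inequality \eqref{HJB-inequality}, the slope bound $v_{b^*}' \le \beta$, and the lower bound.

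For the HJB inequality I would split $]0,\infty[$ at the threshold. A direct computation from \eqref{vf_new_label} using the convolution identities \eqref{RLqp}--\eqref{RLqp2} and the analytical properties of the scale functions gives the generator identities $(\mathcal{L}-q)v_{b^*} = 0$ on $]0,b^*[$ and $(\mathcal{L}-q)v_{b^*} - \delta v_{b^*}' + \delta = 0$ (equivalently $(\mathcal{L}_Y-q)v_{b^*} + \delta = 0$) on $]b^*,\infty[$. Substituting these into \eqref{HJB-inequality} and evaluating $\sup_{0 \le r \le \delta} r(1 - v_{b^*}'(x))$ shows that the inequality is equivalent to the slope conditions $v_{b^*}' \ge 1$ on $]0,b^*[$ and $v_{b^*}' \le 1$ on $]b^*,\infty[$. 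On the waiting region the explicit form \eqref{v_prime_below_b}, $v_{b^*}'(x) = \beta Z^{(q)}(x) - W^{(q)}(x) f(b^*)$, together with $f(b^*) \ge 0$ from \eqref{b_star_relation}, the endpoint values $v_{b^*}'(b^*-) = 1$ and $v_{b^*}'(0+) \le \beta$, and monotonicity of the pertinent ratios of scale functions, yields the two-sided bound $1 \le v_{b^*}' \le \beta$ there; these are the routine computations alluded to in the introduction. Since $\beta > 1$, the remaining slope requirement $v_{b^*}' \le \beta$ on $]b^*,\infty[$ will be subsumed by the sharper bound $v_{b^*}' \le 1$ established next, and the lower bound $\inf_{x\ge0}v_{b^*}(x) > -\infty$ will follow because $v_{b^*}' \ge 0$ throughout makes $v_{b^*}$ nondecreasing, whence $\inf_{x\ge0}v_{b^*}(x) = v_{b^*}(0)$, which is finite.

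The main obstacle, as emphasized in the introduction, is the slope bound $v_{b^*}'(x) \le 1$ on the dividend region $]b^*,\infty[$, where downward jumps can carry the surplus below both $b^*$ and the reflecting level $0$. My plan is probabilistic. Differentiating \eqref{vf_new_label} and matching the outcome against the refracted first-passage identity of \cite{12.} (its Theorem 5), with the simplifications afforded by $g(b^*) = 0$ and \eqref{b_star_relation}, I would establish the representation
\begin{align*}
v_{b^*}'(x) = \beta\, \E_x\!\left[e^{-q \kappa_0^{b^*,-}} 1_{\{\kappa_0^{b^*,-} < \infty\}}\right], \qquad x \ge b^*,
\end{align*}
with $\kappa_0^{b^*,-}$ the first passage strictly below zero of the refracted process $U^{b^*}$ of \cite{12.} started at $x$. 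This quantity is nonnegative, and at $x = b^*$ it equals $1$ by \eqref{eqn_frac_g_h} and $g(b^*) = 0$. Finally, exploiting the pathwise monotonicity of $U^{b^*}$ in its starting point (so that $\kappa_0^{b^*,-}$ is stochastically nondecreasing in $x$), the map $x \mapsto \E_x[e^{-q\kappa_0^{b^*,-}} 1_{\{\kappa_0^{b^*,-}<\infty\}}]$ is nonincreasing, and comparison with its value $1$ at $x = b^*$ gives $v_{b^*}'(x) \le 1$ for all $x \ge b^*$. This closes the HJB inequality, and Lemma \ref{verificationlemma} then delivers $v = v_{b^*}$ together with the optimality of $\pi^{b^*}$. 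I expect the delicate points to be the clean derivation of the probabilistic representation from the scale-function formula and the rigorous justification of the monotonicity of the ruin-time Laplace transform in the initial value.
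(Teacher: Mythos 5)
There is a genuine gap, and it is structural: you have proved the wrong statement. The statement in question is the verification lemma itself, i.e.\ the implication that \emph{any} $\hat{\pi} \in \mathcal{A}$ whose NPV $v_{\hat{\pi}}$ enjoys the stated smoothness and satisfies the three conditions in \eqref{HJB-inequality} must have $v_{\hat{\pi}} = v$. Your proposal instead takes this lemma as given (``the plan is to invoke the verification lemma\ldots the whole problem then reduces to checking that lemma's hypotheses'') and verifies its hypotheses for the candidate $\pi^{b^*}$. That is the content of Theorem \ref{main_theorem} together with Lemmas \ref{generator_lemma} and \ref{lemma_slope} --- and on that score your sketch is essentially faithful to the paper, including the key probabilistic representation $v_{b^*}'(x) = \beta\, \E_x\big[e^{-q \kappa_0^{b^*,-}} 1_{\{\kappa_0^{b^*,-} < \infty\}}\big]$ via Theorem 5(ii) of \cite{12.} and the monotonicity of the refracted process's ruin-time Laplace transform --- but it is simply not a proof of Lemma \ref{verificationlemma}, and it is circular as one, since the lemma is its own first step.

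What an actual proof of the lemma requires, and what is entirely absent from your write-up, is the stochastic-calculus supermartingale argument: for an \emph{arbitrary} admissible $\pi$, apply the Meyer--It\^o/change-of-variables formula (Theorem 4.71 of \cite{15.}, which is also how the paper relaxes smoothness at zero) to $e^{-qt} v_{\hat{\pi}}(V_t^{\pi})$, decompose into a local martingale plus drift, dominate the drift using the first inequality of \eqref{HJB-inequality} with $r = \ell_t^{\pi} \in [0,\delta]$, control the capital-injection contributions (both the continuous part and the jumps of $R^{\pi}$) by $\beta \int e^{-qs}\, \diff R_s^{\pi}$ using the slope bound $v_{\hat{\pi}}' \leq \beta$ together with the admissibility condition $\int_{[0,\infty[} e^{-qt}\, \diff R_t^{\pi} < \infty$, and finally localize and let $t \to \infty$, where the transversality step $\limsup_{t\to\infty} \E_x[e^{-qt} v_{\hat{\pi}}(V_t^{\pi})] \geq 0$ is exactly what the lower bound $\inf_{x \geq 0} v_{\hat{\pi}}(x) > -m$ is for. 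This yields $v_{\hat{\pi}}(x) \geq v_{\pi}(x)$ for every admissible $\pi$, hence $v_{\hat{\pi}} \geq v$, and the reverse inequality is trivial because $\hat{\pi} \in \mathcal{A}$. The paper omits these details by citing Lemma 4.2 of \cite{5.}; your proposal omits them by misreading the target, which is a different and fatal matter --- notably, your argument never uses the third condition of \eqref{HJB-inequality} at all, a telltale sign that the statement being proved is not the one asked.
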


%By Proposition 2 of \cite{APP2007} and \blue{the same argument} in the proof of Theorem 8.10 of \cite{K}, the processes
%\begin{align*}
%e^{-q (t \wedge \tau^-_{0} \wedge \tau^+_B)} Z^{(q)}(X_{t \wedge \tau^-_{0} \wedge \tau^+_B}) \quad \textrm{and} \quad e^{-q (t \wedge \tau^-_{0} \wedge \tau^+_B)}  \Big( \overline{Z}^{(q)}(X_{t \wedge \tau^-_{0} \wedge \tau^+_B}) + \frac {\psi'(0+)} q \Big), \quad t \geq 0,
%\end{align*}
%for any $B > 0$ are $\p_x$-martingales, where we define the first down- and up-crossing times by
%\begin{equation*}
%\tau_a^-:= \inf\{t>0: X_t<a\}\ \ \ \text{and}\ \ \ \tau_a^+:=\inf\{t>0:X_t>a\},\ \ a\in\mathbb{R}.
%\end{equation*}
%Thanks to the smoothness of $Z^{(q)}$ and $\overline{Z}^{(q)}$ on $(0,\infty)$ (see Remark \ref{remark_scale_function_properties} (1)), we obtain

%\red{How about deleting the above, and simply say ``By the proof of Theorem 2.1 in \cite{BKY}?" }
We shall first compute the generator parts.
%By the proof of Theorem 2.1 in \cite{BKY}, for $y > 0$,
%\begin{align}
%(\mathcal{L}-q) Z^{(q)}(y) &=(\mathcal{L}-q) \Big(\overline{Z}^{(q)}(y) + \frac {\psi'(0+)} q \Big) = 0,  \label{martingale_Z_R} \\
%%\end{align}
%%Similarly,
%%\begin{align}
%(\mathcal{L}_Y-q) \mathbb{Z}^{(q)}(y) &=(\mathcal{L}_Y-q) \Big(\overline{\mathbb{Z}}^{(q)}(y) + \frac {\psi'(0+)} q \Big) = 0. \label{martingale_Z_R_2}
%\end{align}

\begin{lemma} \label{generator_lemma}
%\red{[This holds for any $b$ and so I changed from $b^*$ to $b$]}
Fix $b \geq 0$.
%\begin{enumerate}
%\item
(i) If $b > 0$, we have $(\mathcal{L}-q) v_b(x) = 0$ for $0 < x < b$. \\
%\item
(ii) We have $(\mathcal{L}_Y-q) v_b(x) + \delta=(\mathcal{L}-q) v_b(x) + \delta (1- v_b'(x)) = 0$ for $x > b$.
%\end{enumerate}
\end{lemma}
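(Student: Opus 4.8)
The plan is to differentiate the closed form of $v_b$ from Lemma~\ref{lemma_NPV} and reduce everything to the action of $(\mathcal{L}-q)$ and $(\mathcal{L}_Y-q)$ on scale functions. The second equality in part~(ii) is free: since $\mathcal{L}_Y F = \mathcal{L}F - \delta F'$, we have $(\mathcal{L}_Y-q)v_b + \delta = (\mathcal{L}-q)v_b - \delta v_b' + \delta = (\mathcal{L}-q)v_b + \delta(1-v_b')$, so the whole lemma amounts to showing $(\mathcal{L}-q)v_b = 0$ on $]0,b[$ and $(\mathcal{L}_Y-q)v_b = -\delta$ on $]b,\infty[$. The toolkit is a short list of generator identities on $]0,\infty[$: the classical $(\mathcal{L}-q)W^{(q)} = (\mathcal{L}-q)Z^{(q)} = 0$; the consequence $(\mathcal{L}-q)\overline{W}^{(q)} = 1$ (from $Z^{(q)} = 1 + q\overline{W}^{(q)}$ and $(\mathcal{L}-q)1 = -q$); and $(\mathcal{L}-q)\overline{Z}^{(q)} = \psi'(0+)$, which is consistent with the computation of $v_0$ in Remark~\ref{remark_when_b_zero}. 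The same identities hold with $(W,Z,\overline{W},\overline{Z},\mathcal{L},\psi)$ replaced by $(\mathbb{W},\mathbb{Z},\overline{\mathbb{W}},\overline{\mathbb{Z}},\mathcal{L}_Y,\psi_Y)$, so in particular $(\mathcal{L}_Y-q)\overline{\mathbb{W}}^{(q)} = 1$ and $(\mathcal{L}_Y-q)\overline{\mathbb{Z}}^{(q)} = \psi'(0+)-\delta$; and since $\mathcal{L}_Y = \mathcal{L} - \delta\tfrac{\diff}{\diff x}$, each $X$-identity converts into its $\mathcal{L}_Y$-counterpart, e.g.\ $(\mathcal{L}_Y-q)Z^{(q)} = -\delta q W^{(q)}$ and $(\mathcal{L}_Y-q)\overline{Z}^{(q)} = \psi'(0+) - \delta Z^{(q)}$.

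For part~(i) the key observation is that on $]-\infty,b[$ every refraction term of $v_b$ disappears: $\overline{\mathbb{W}}^{(q)}(x-b) = 0$ and each integral $\int_b^x \mathbb{W}^{(q)}(x-y)(\cdots)\diff y$ vanishes because $\mathbb{W}^{(q)}(x-y)=0$ once $y$ lies between $x$ and $b$. Hence $v_b$ coincides on all of $]-\infty,b[$ with $\beta\big(\overline{Z}^{(q)} + \psi'(0+)/q\big) - (f(b)/q)Z^{(q)}$; importantly this holds below zero too (matching the slope-$\beta$ behaviour implied by \eqref{v_prime_below_b}), so the nonlocal part of $\mathcal{L}$ at a point of $]0,b[$ only ever sees this single expression. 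Applying $(\mathcal{L}-q)$ and using $(\mathcal{L}-q)\overline{Z}^{(q)} = \psi'(0+)$, $(\mathcal{L}-q)(\psi'(0+)/q) = -\psi'(0+)$ and $(\mathcal{L}-q)Z^{(q)} = 0$ makes the contributions cancel, giving $0$.

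For part~(ii), on $]b,\infty[$ I would keep the full representation and split each convolution through $\int_b^x = \int_0^x - \int_0^b$. The $\int_0^x$ pieces are converted by \eqref{RLqp}--\eqref{RLqp2} into differences of scale functions, to which the identities above apply directly; one finds $(\mathcal{L}_Y-q)\big[\delta\int_b^x \mathbb{W}^{(q)}(x-y)Z^{(q)}(y)\diff y\big] = \delta Z^{(q)}(x)$ and $(\mathcal{L}_Y-q)\big[\delta\int_b^x \mathbb{W}^{(q)}(x-y)W^{(q)}(y)\diff y\big] = \delta W^{(q)}(x)$. The $\int_0^b$ remainders are handled by translation invariance: for fixed $y\in[0,b]$, $(\mathcal{L}_Y-q)\mathbb{W}^{(q)}(\cdot-y)(x) = 0$ for every $x>b\geq y$, so interchanging $(\mathcal{L}_Y-q)$ with $\int_0^b\cdots\diff y$ annihilates them. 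Collecting the survivors — $-\delta$ from $-\delta\overline{\mathbb{W}}^{(q)}(x-b)$, $\beta(\psi'(0+)-\delta Z^{(q)})$ from $\beta\overline{Z}^{(q)}$, $-\beta\psi'(0+)$ from the constant, $+\beta\delta Z^{(q)}$ from the $Z^{(q)}$-convolution, and two opposite $f(b)\delta W^{(q)}$ terms from the $Z^{(q)}$ and $W^{(q)}$ pieces — everything cancels save $-\delta$, which is precisely $(\mathcal{L}_Y-q)v_b = -\delta$.

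The main obstacle is the legitimacy of these globally-defined, term-by-term manipulations across the boundaries $0$ and $b$: because $\mathcal{L}$ and $\mathcal{L}_Y$ are nonlocal, their value at a point above $b$ (or inside $]0,b[$) probes $v_b$ over the entire half-line beneath it, where its analytic form changes. I would therefore check carefully that (a) the scale-function expression being differentiated really equals $v_b$ as a function on all of $\R$, so that the closed-form generator identities — which are themselves statements about the globally defined $W^{(q)},Z^{(q)},\overline{Z}^{(q)}$ with the conventions \eqref{z_below_zero} — may be invoked verbatim, and (b) the Fubini-type interchange of $(\mathcal{L}_Y-q)$ with $\int_0^b$ is valid, which follows from local boundedness of the scale functions together with $\int_{]-\infty,0[}(1\wedge z^2)\,\Pi(\diff z)<\infty$. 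The differentiability needed to apply the generators ($C^1$ in the bounded-variation case, $C^2$ in the unbounded-variation case) is exactly what Remark~\ref{remark_scale_function_properties} provides.
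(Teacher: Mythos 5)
Your proposal is correct and takes essentially the same route as the paper: both proofs apply $(\mathcal{L}-q)$ (resp.\ $(\mathcal{L}_Y-q)$) term by term to the globally defined scale-function representation of $v_b$, using that $Z^{(q)}$ and $\overline{Z}^{(q)}(\cdot)+\psi'(0+)/q$ are annihilated by $(\mathcal{L}-q)$ on $]0,\infty[$ (Theorem 2.1 of \cite{8.}) and that the convolution terms satisfy $(\mathcal{L}_Y-q)\int_b^x \mathbb{W}^{(q)}(x-y)l(y)\diff y = l(x)$ for $l=W^{(q)},Z^{(q)}$. The only difference is that the paper justifies this last identity by citing the argument of Lemma 4.5 of \cite{18.}, whereas you re-derive it from \eqref{RLqp}--\eqref{RLqp2} together with translation invariance and the interchange of $(\mathcal{L}_Y-q)$ with $\int_0^b$ --- a self-contained substitute for that citation.
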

\begin{proof}
(i) For $0 < x < b$,
%\eqref{martingale_Z_R}
Theorem 2.1 in \cite{8.}
leads to
\begin{align*}
(\mathcal{L}-q)v_{b}(x)
&=\beta (\mathcal{L}-q) \Big(\overline{Z}^{(q)}(x) + \frac {\psi'(0+)} q \Big) -\frac{f(b)}{q}
% -\frac {\beta  Z^{(q)} (b^*)-1}{qW^{(q)}(b^*)}
(\mathcal{L}-q) Z^{(q)}(x) =0.\end{align*}
%By noting that $(\mathcal{L}-q) \Big(\overline{Z}^{(q)}(x) + \frac {\psi'(0+)} q \Big)  = (\mathcal{L}-q) Z^{(q)}(x) = 0$, the martingale property below $b$ holds.

(ii)
%\red{[changed a bit below.]}
On the other hand, for $x > b$, Theorem 2.1 in \cite{8.}
%\eqref{martingale_Z_R} and \eqref{martingale_Z_R_2}
implies that
\begin{align*}
&(\mathcal{L}_Y-q)  \overline{\mathbb{W}}^{(q)}(x-b) = q^{-1}(\mathcal{L}_Y-q)  (\mathbb{Z}^{(q)}(x-b) -1) =1, \\
&(\mathcal{L}_Y-q) \Big(\overline{Z}^{(q)}(x) + \frac {\psi'(0+)} q \Big) = - \delta \frac \partial {\partial x}\Big(\overline{Z}^{(q)}(x) + \frac {\psi'(0+)} q \Big)  = - \delta Z^{(q)}(x), \\
%&(\mathcal{L}_Y-q) \Big( \int_b^x\mathbb{W}^{(q)}(x-y)Z^{(q)}(y)\diff y \Big) = Z^{(q)} (x), \\
&(\mathcal{L}_Y-q) Z^{(q)}(x) =  - \delta Z^{(q)\prime}(x)  = - \delta q W^{(q)} (x).
%(\mathcal{L}_Y-q) \mathbb{W}^{(q)}(x) = 0, \\
%&(\mathcal{L}_Y-q) \Big(\int_b^x\mathbb{W}^{(q)}(x-y)W^{(q)}(y)\diff y \Big) =  W^{(q)}(x) = 0.
\end{align*}
In addition, we get $(\mathcal{L}_Y-q)  \Big( \int_b^x\mathbb{W}^{(q)}(x-y)l(y)\diff y \Big) = l(x)$ by the argument in the proof of Lemma 4.5 of \cite{18.}, for $l = Z^{(q)}, W^{(q)}$. Applying these in \eqref{vf_new_label}, we have that claim (2) holds. \hfill\qed
\end{proof}

%\section{Computations for the case $x>b^*$.}

\begin{lemma} \label{lemma_slope}
For the threshold $b^*$ defined by $(\ref{defbthreshold})$, we have $\beta\geq v_{b^*}'(x) \geq 1$ for $x < b^*$, and $0 \leq v_{b^*}'(x) \leq 1$ for $x \geq b^*$.
\end{lemma}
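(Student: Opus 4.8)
The plan is to split the proof according to the sign regions of $v_{b^*}'(x)$, treating the cases $x < b^*$ and $x \geq b^*$ separately, and exploiting the two complementary slope formulas from Remark \ref{remark_slope_at_b} together with the probabilistic identity established in Lemma \ref{g_probabilisitic}.

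For the region $x < b^*$, I would start from the explicit formula \eqref{v_prime_below_b}, namely $v_{b^*}'(x) = \beta Z^{(q)}(x) - W^{(q)}(x) f(b^*)$. Since $Z^{(q)}$ is nondecreasing and $W^{(q)}$ is increasing on $[0,\infty[$, the function $x \mapsto v_{b^*}'(x)$ on $]0,b^*[$ should be analyzed via its own derivative $\beta q W^{(q)}(x) - W^{(q)\prime}(x) f(b^*)$; the key will be to show monotonicity or at least to pin down the values at the two endpoints $x = 0$ and $x = b^*-$. At the right endpoint, Remark \ref{remark_slope_at_b} gives $v_{b^*}'(b^*-) = 1 + g(b^*) = 1$ because $g(b^*) = 0$ by the defining property of $b^*$ (for the case $b^* > 0$). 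At the left endpoint, using $Z^{(q)}(0) = 1$ and $W^{(q)}(0) \in \{0, c^{-1}\}$ from Remark \ref{remark_scale_function_properties}(ii), one checks $v_{b^*}'(0+) = \beta - W^{(q)}(0) f(b^*) \leq \beta$, and for the unbounded variation case this equals exactly $\beta$. Establishing the upper bound $\beta$ throughout and the lower bound $1$ then reduces to a monotonicity argument for $v_{b^*}'$ on $]0,b^*[$, where I expect the sign of $f(b^*)$ (which is positive and equals $(\beta Z^{(q)}(b^*)-1)/W^{(q)}(b^*)$ by \eqref{b_star_relation}) to force $v_{b^*}'$ to interpolate monotonically between these endpoint values.

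For the region $x \geq b^*$, the crucial observation is the identity in Lemma \ref{g_probabilisitic}(i): the general function $g(x)/h(x) = \beta\,\E_x[e^{-q\kappa_0^{x,-}} 1_{\{\kappa_0^{x,-}<\infty\}}] - 1$. The plan is to reinterpret $v_{b^*}'(x)$ for $x > b^*$ in terms of this same Laplace transform of the ruin time of the refracted \lev process. Concretely, I would seek to show that $v_{b^*}'(x)$ coincides (up to the relevant normalization) with $\beta\,\E_x[e^{-q\kappa_0^{b^*,-}}1_{\{\kappa_0^{b^*,-}<\infty\}}]$ or a closely related expectation, so that both the bound $v_{b^*}'(x) \leq 1$ and the monotone decay follow from the probabilistic representation: since $U^{b^*} \geq Y$ and the ruin probability/Laplace transform decreases as the starting point increases, $v_{b^*}'$ is nonincreasing in $x$ and bounded above by its value $1$ at $x = b^*+$, the latter again coming from $v_{b^*}'(b^*+) = 1 + (1+\delta\mathbb{W}^{(q)}(0))g(b^*) = 1$. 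The nonnegativity $v_{b^*}'(x) \geq 0$ would then follow from the nonnegativity of the Laplace transform together with $\beta > 1$.

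The main obstacle I anticipate is precisely the identification in the upper region: showing that the slope $v_{b^*}'(x)$ for $x > b^*$ equals a Laplace transform of the ruin time of the refracted process $U^{b^*}$, rather than of $U^x$ (note that the probabilistic identity in Lemma \ref{g_probabilisitic} has the refraction level tied to the evaluation point). This requires a careful manipulation of the formula \eqref{v_b_prime_kazu} for $v_b'(x)$ with $b = b^*$ fixed while $x$ varies above $b^*$, likely combining the fluctuation identities \eqref{RLqp}, \eqref{RLqp2} and the generator relation from Lemma \ref{generator_lemma}(ii), which gives $\delta(1 - v_{b^*}'(x)) = -(\mathcal{L}-q)v_{b^*}(x)$ for $x > b^*$. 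This is the delicate step flagged in the introduction as the genuinely hard part of the spectrally negative case — proving the uniform slope bound $v_{b^*}' \leq 1$ above the threshold — and I would expect most of the technical work to concentrate there, whereas the bounded-region estimates reduce to direct calculus on the scale functions.
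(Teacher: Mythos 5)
Your central idea for the region $x \geq b^*$ --- identifying $v_{b^*}'(x)$ with $\beta\,\E_x\big[e^{-q\kappa_0^{b^*,-}}1_{\{\kappa_0^{b^*,-}<\infty\}}\big]$, with the refraction level frozen at $b^*$ while the starting point $x$ varies --- is precisely the paper's key observation. Moreover, the step you flag as the main obstacle is in fact short: substituting the second equality of \eqref{b_star_relation} (valid because $g(b^*)=0$ when $b^*>0$) into \eqref{v_b_prime_kazu} produces verbatim the expression of Theorem 5(ii) of \cite{12.} for the ruin-time Laplace transform of the refracted process $U^{b^*}$ started at $x$; no generator computation and no use of \eqref{RLqp}--\eqref{RLqp2} is needed. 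Two genuine gaps remain, however. First, your treatment of $x<b^*$ is not a proof: to make $v_{b^*}'$ ``interpolate monotonically'' between $\beta$ and $1$ you would need the sign of $v_{b^*}''(x)=\beta q W^{(q)}(x)-f(b^*)W^{(q)\prime}(x)$ on all of $]0,b^*[$, and this does not follow from the positivity of $f(b^*)$ (nor can the two endpoint values exclude interior violations of the bounds); one would need an extra input such as the monotone decrease of $W^{(q)\prime}/W^{(q)}$. The paper avoids this issue entirely because the probabilistic representation above holds for \emph{all} $x>0$, not only $x>b^*$: for $x<b^*$ every term containing $\mathbb{W}^{(q)}(x-b^*)$ or $\int_{b^*}^x$ vanishes, and the representation reduces exactly to \eqref{v_prime_below_b}. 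Hence $v_{b^*}'$ is globally nonincreasing with values in $[0,\beta]$, and the single fact $v_{b^*}'(b^*)=1$ (Remark \ref{remark_slope_at_b}) splits the two asserted chains of inequalities at once.

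Second, you never treat the case $b^*=0$, and your whole scheme presupposes $g(b^*)=0$, which is exactly what can fail there: by Lemma \ref{lemma_criteria_zero} this case occurs only for bounded variation, and one may have $g(0)<0$ strictly, so \eqref{b_star_relation} is unavailable and $v_0'(0+)=1+(1+\delta\mathbb{W}^{(q)}(0))g(0)$ is strictly less than $1$ rather than equal to it. The paper devotes a separate step to this case: using the closed form of $v_0$ from Remark \ref{remark_when_b_zero}, it shows $v_0''(x+)<0$ (concavity, from the monotone decrease of $\mathbb{W}^{(q)\prime}(x+)/\mathbb{W}^{(q)}(x)$ to $\varphi(q)$), then $v_0'(0+)\leq 1$ from $g(0)\leq 0$, and finally $v_0'(x)\to 0$ as $x\to\infty$ (Theorem 8.1(ii) of \cite{19.}) to conclude $0\leq v_0'\leq 1$. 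Without an argument of this kind your proof of the statement is incomplete.
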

\begin{proof}
%The proof is split into two cases, that of $b^*>0$ and that  of $b^*=0$. \red{[this sentence can be omitted?]}
Step (i): Suppose $b^* > 0$.
By \eqref{v_b_prime_kazu} and \eqref{b_star_relation}, we have
\begin{align}
v_{b^*}'(x)
=&-\delta \mathbb{W}^{(q)}(x-b^*) + \beta Z^{(q)}(x) \notag\\
 &+ \beta \delta  \Big[  Z^{(q)} (b^*) \mathbb{W}^{(q)} (x-b^*) + q \int_{b^*}^x \mathbb{W}^{(q)} (x-y) W^{(q)} (y) \diff y \Big] \nonumber \\
&-  \frac {\beta  Z^{(q)} (b^*)-1}{W^{(q)}(b^*)}
\Bigg[ W^{(q)}(x)+  \delta \Big( W^{(q)} (b^*) \mathbb{W}^{(q)} (x-b^*) \notag \\
& +  \int_{b^*}^x \mathbb{W}^{(q)} (x-y) W^{(q) \prime} (y) \diff y \Big) \Bigg] \nonumber \\
%\end{split}
%\end{align*}
%}
%
%
%Therefore differentiating the above expression, we obtain
%\begin{align} %\label{v_der_1}
%	v_{b^*}'(x)
%=&\beta Z^{(q)}(x) + \beta\delta q\int_{b^*}^x\mathbb{W}^{(q)}(x-y)W^{(q)}(y)\diff y \notag\\
	%&-\frac {\beta  Z^{(q)} (b^*)-1}{W^{(q)}(b^*)}  \Big(W^{(q)}(x)+\delta\int_{b^*}^x\mathbb{W}^{(q)}(x-y)W^{(q)\prime}(y)\diff y \Big) \notag \\
%\end{align}
%We now note that by integration by parts
%\begin{align*}
%\varphi(q)\int_0^{\infty}e^{-\varphi(q)y}W^{(q)}(y+b^*)dy=W^{(q)}(b^*)+\int_0^{\infty}e^{-\varphi(q)y}W^{(q)\prime}(y+b^*)dy.
%\end{align*}
%Hence using \eqref{b_star_relation2} we obtain that
%\begin{align*}
%\frac{\beta Z^{(q)}(b^*)-1}{W^{(q)}(b^*)}=\beta q\frac{\int_0^{\infty}e^{-\varphi(q)y}W^{(q)}(y+b^*)dy}{\int_0^{\infty}e^{-\varphi(q)y}W^{(q)\prime}(y+b^*)dy}.
%\end{align*}
%So using this expression in \eqref{v_der_1} and Therorem 5 (ii) in \cite{KL}
%\begin{align}
\label{v_der_2}
%v_{b^*}'(x)&
=&\beta Z^{(q)}(x) + \beta\delta q\int_{b^*}^x\mathbb{W}^{(q)}(x-y)W^{(q)}(y)\diff y  \notag\\
&-\beta q\frac{\int_0^{\infty}e^{-\varphi(q)y}W^{(q)}(y+b^*)\diff y}{\int_0^{\infty}e^{-\varphi(q)y}W^{(q)\prime}(y+b^*)\diff y}\bigg(W^{(q)}(x)\notag\\
&+\delta\int_{b^*}^x\mathbb{W}^{(q)}(x-y)W^{(q)\prime}(y)\diff y \bigg)\notag\\
=&\beta\E_x\left[e^{-q \kappa_0^{b^*,-}}1_{\{\kappa_0^{b^*,-}<\infty\}}\right],
\end{align}
where the second equality holds by the second equality of \eqref{b_star_relation},
%\eqref{g_h_fraction} with $g(b^*)=0$
 and the last equality holds by Theorem 5 (ii) in \cite{12.}.
%where $\kappa_0^-=\inf\{t>0:U_t^{b^*}<0\}$ and $U^{b^*}$ is the solution to
%\[
%U_t^{b^*}=X_t-\delta\int_0^t1_{\{U_s^{b^*}>b^*\}}ds.
%\]
Thanks to \eqref{v_der_2}, we deduce that $0 \leq v_{b^*}'(x) \leq \beta = v_{b^*}'(0-)$ and $v_{b^*}'(x)$ is non-increasing for $x>0$. This and $v_{b^*}'(b^*) = 1$ implied by Remark \ref{remark_slope_at_b} complete the proof.
% (I think), this implies that for $x>b^*$
%\[
%v_{b^*}'(x)\leq v_{b^*}'(b^*)=1.
%\]

Step (ii): Suppose $b^*=0$ (then, necessarily $X$ is of bounded variation by Lemma \ref{lemma_criteria_zero}). By Remark \ref{remark_when_b_zero}, we have, for $x \neq 0$,
\begin{align*}
v_0'(x)
=&  \beta \Big( \mathbb{Z}^{(q)}(x) -     \frac {q} {\varphi(q)}    \mathbb{W}^{(q)}(x) \Big), \\
v_0^{\prime \prime}(x+)
=& \beta q \mathbb{W}^{(q)}(x)  \Big(  1 -     \frac {1} {\varphi(q)}    \frac {\mathbb{W}^{(q)\prime}(x+)} {\mathbb{W}^{(q)}(x)}  \Big).
\end{align*}
It is known that $x \mapsto \mathbb{W}^{(q)\prime}(x+) /\mathbb{W}^{(q)}(x) $ is monotonically decreasing in $x$ as in (8.18) and Lemma 8.2 of \cite{19.}, and it converges to $\varphi(q)$.  Hence, $v_0''(x+) < 0$, which implies that $v_0$ is concave.

On the other hand, we have $v_0^{\prime}(0+) =  1 + (1+\delta \mathbb{W}^{(q)}(0)) g(0)$ by Remark \ref{remark_slope_at_b}. As $g(0) \leq 0$ (see Lemma \ref{lemma_criteria_zero}), we have $v_0^{\prime}(0+) \leq 1$. It follows that $v_0'(x)  \leq 1$ for all $x$.
%and % in particular \red{for the bounded variation case,}
% \begin{align} \label{v_zero_zero_prime}
%v_0'(0+)
%&=   \beta -     \frac { \beta q} {\varphi(q)}    \mathbb{W}^{(q)}(0) =  \beta -     \frac { \beta q} {\varphi(q)}    \frac 1 {c -\delta}.
%\end{align}
Finally, we have $v_0'(x) \xrightarrow{x \uparrow \infty} 0$ because $ \mathbb{Z}^{(q)}(x) -  {q}  \mathbb{W}^{(q)}(x) /  {\varphi(q)}  $ vanishes in the limit by Theorem 8.1 (ii) of \cite{19.}. Hence, we have $v'_0(x) \geq 0$.\hfill\qed

\end{proof}

\begin{remark} \label{lemma_bound}
%\red{[How about this?]}
By the monotonicity of $v_{b^*}$ in view of Lemma \ref{lemma_slope} and Assumption \ref{assump_mean_finite},  we have $\inf_{x\geq0}v_{b^*}(x) \geq  v_{b^*}(0) > -\infty$.
%\red{[JL: I don't know how this bound on the derivative is related to the above.]}
\end{remark}

\begin{proof}[of Theorem {\normalfont\ref{main_theorem}}]
We shall show that $v_{b^*}$ satisfies all conditions given in Lemma \ref{verificationlemma}. First, by Lemma \ref{lemma_smooth_fit} and Remark \ref{remark_smoothness_zero}, the desired continuity/smoothness of $v_{b^*}$ holds.

It is left to verify the variational inequalities \eqref{HJB-inequality}. Lemma \ref{lemma_slope} leads to
\begin{align*}
\sup_{0\leq r\leq\delta} r \big(1-v'_{b^*}(x) \big) = \left\{ \begin{array}{ll} \delta \big(1-v'_{b^*}(x) \big) \leq \delta, & \textrm{if }  x > b^*, \\ 0, & \textrm{if }   0 < x \leq b^*. \end{array} \right.
\end{align*}
This and Lemma \ref{generator_lemma} yield the first item of \eqref{HJB-inequality} with equality.
The second item holds by Lemma \ref{lemma_slope}.
Lastly, the third item holds by Remark \ref{lemma_bound}. \hfill\qed
%it is observed in Remark \ref{lemma_bound} that $\inf_{x\geq0}v_{b^*}(x) \geq  v_{b^*}(0) > -\infty$, which verifies the third item.
%Using the above lemmas, we now confirm that $v_{b^*}$ satisfies \eqref{HJB-inequality}. First, proceeding like in the proof of Lemma 7 in \cite{KLP},  Lemmas \ref{ver_1} and \ref{ver_2} imply the first item of \eqref{HJB-inequality}.  The second item of \eqref{HJB-inequality} is immediate by Lemma \ref{ver_2}. Finally, the third item is shown by Remark \ref{lemma_bound}. This completes the proof of Theorem \ref{theorem_dividend}.
\end{proof}

\begin{remark}
Regardless of the negative jumps of $X$, our conclusion interestingly indicates that our conjectured threshold strategy is still the optimal strategy. However, as the term $\psi'(0+) = \E X_1 = \gamma + \int_{]-\infty,-1]}z \Pi (\diff z)$ appears in the value function, the negative jumps clearly have direct impacts on the optimal solution.

Another important impact of the jumps can be seen in the bounded variation case, where the optimal threshold can be $b^*=0$, which implies that it is optimal to always pay dividends. This outcome does not occur in the classical Brownian motion model.\end{remark}

\section{Numerical Examples} \label{section_numerics}
%\red{[added this]}
We conclude this paper with a sequence of numerical experiments on the underlying process modeled by the spectrally negative \lev process with phase-type jumps of the form that
%\begin{align*}
$ X_t - X_0= c t+\sigma B_t - \sum_{n=1}^{N_t} Z_n$, for $0\le t <\infty$. %\label{X_phase_type}
%\end{align*}
Here, $B=( B_t; t\ge 0)$ is a standard Brownian motion, $N=(N_t; t\ge 0 )$ is a Poisson process with arrival rate $\kappa$, and  $Z = ( Z_n; n = 1,2,\ldots )$ is an i.i.d.\ sequence of phase-type random variables that approximate the Weibull distribution with shape parameter $2$ and scale parameter $1$ (see \cite{16.} for the parameters of the phase-type distribution and also \cite{20.} for the accuracy of approximation). The processes $B$, $N$, and $Z$ are assumed to be mutually independent.  We refer the reader to \cite{14.} and \cite{20.} for the forms of the corresponding scale functions. We consider \textbf{Case} 1 (unbounded variation) with $\sigma = 0.2$ and $c = 2$ and \textbf{Case} 2  (bounded variation) with $\sigma = 0$ and $c = 4$.  For other parameters, let us set $\kappa = 2$, $q=0.05$, $\beta = 1.5$ and $\delta = 1$ unless stated otherwise.

Recall that the optimal threshold $b^*$ is given by \eqref{defbthreshold}. In Figure \ref{plot_g_h}, we plot the function $b \mapsto g(b)/h(b)$ (recall that $h$ is uniformly positive) for various values of $\beta$ for \textbf{Cases} 1 and 2.  For the case $g(0) \leq 0$ (and hence $g(0)/h(0) \leq 0$), we have $b^* = 0$. Otherwise, $g/h$ is monotonically decreasing and $b^*$ becomes the value at which $g$ (and $g/h$) vanish. As observed in Lemma \ref{lemma_criteria_zero}, for \textbf{Case} 1 (unbounded variation), $b^* > 0$ for any value of $\beta > 1$ while in \textbf{Case} 2 (bounded variation case), $b^* = 0$ if $\beta$ is sufficiently close to $1$.  In order to confirm the optimality of the selected threshold strategy $\pi^{b^*}$,
we plot, as shown in Figure \ref{plot_optimality} (for $\beta = 1.5$), the value function $v_{b^*}$  together with $v_b$ for $b \neq b^*$.  For \textbf{Case} 1, we have $b^* > 0$; while for \textbf{Case} 2, we have $b^* = 0$. It is illustrated in the figure that $v_{b^*}$ satisfactorily dominates $v_b$ uniformly in $x$.

In Figure \ref{plot_sensitivity}, we present the sensitivity of the optimal solutions with respect to parameters $\beta$ and $\delta$ focusing on \textbf{Case} 1.  On the left panel, we plot $v_{b^*}$ for $\beta$ ranging from $1.01$ to $3$. The graph indicates that the value function decreases in $\beta$ uniformly in $x$ and that the optimal threshold $b^*$ increases as $\beta$ increases.  On the right panel, we show $v_{b^*}$ for $\delta$ varying from $0.01$ to $3$ along with results in the case without the absolutely continuous assumption as in \cite{1.}. It is observed that the value function converges increasingly to that in \cite{1.}.  The convergence of $b^*$ to the optimal barrier in \cite{1.} is also confirmed.

\section{Conclusions}\label{section_conc}
We solved the dividend problem with capital injection under the constraint that the cumulative dividend strategy is absolutely continuous. In particular, we proved that the solution is a refraction-reflection strategy that reflects the surplus from below at zero and decreases the drift at a suitable threshold.

It is noted that the methods and results in this current paper can potentially be applied in other related stochastic control problems driven by one dimensional spectrally one-sided \lev processes.  In inventory/cash management control problems as in \cite{21.}, it is of interest to pursue the optimality of refraction-reflection strategies under suitable absolutely continuous assumptions.  Using the results in \cite{7.}, smooth fit and verification are expected to be carried out in an efficient way as  in this current paper.

\begin{figure}[htbp]
\vspace*{0mm}
\hspace*{-10mm}
\begin{center}
\begin{minipage}{1.0\textwidth}
\centering
\begin{tabular}{cc}
 \includegraphics[scale=0.3]{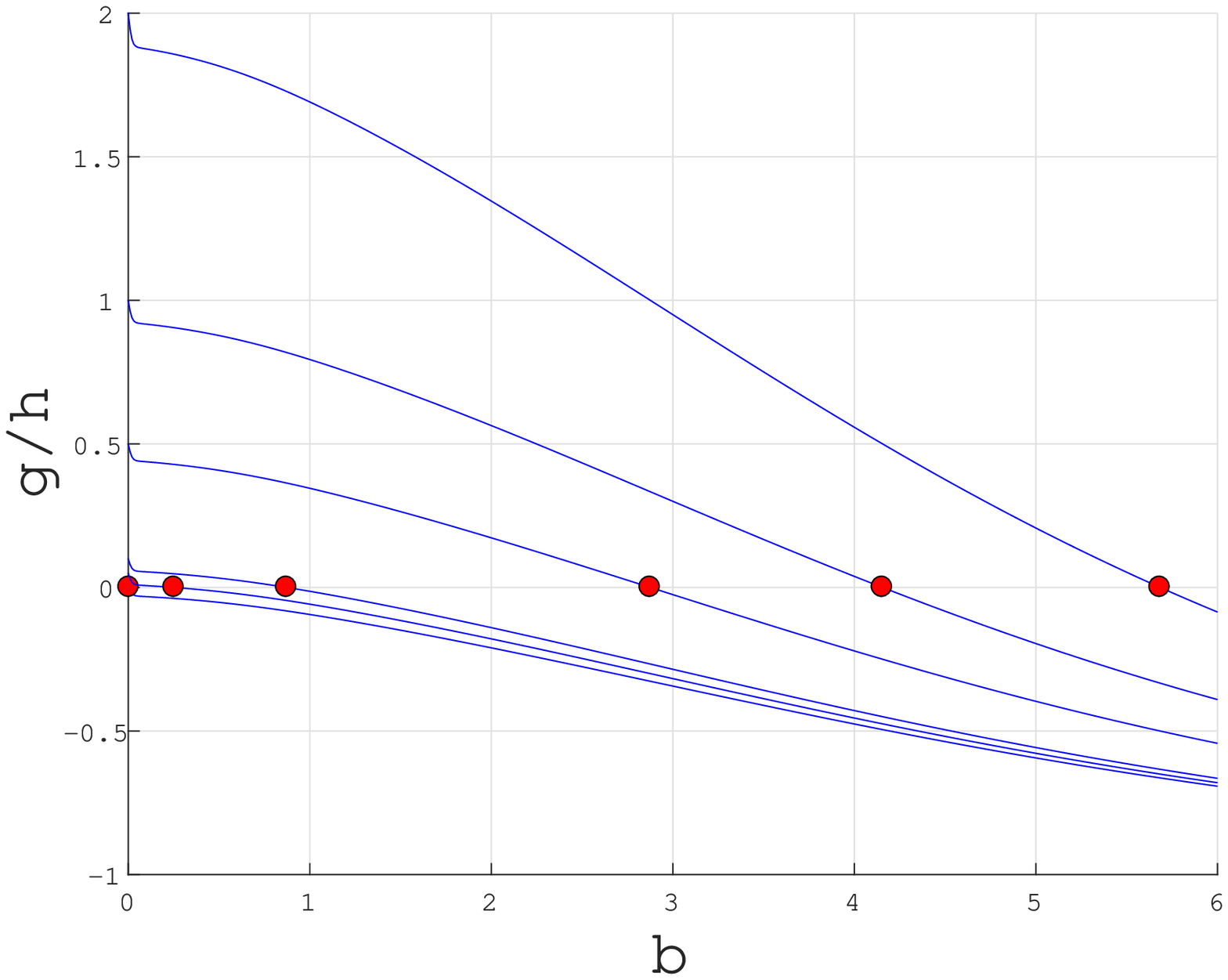} & \includegraphics[scale=0.3]{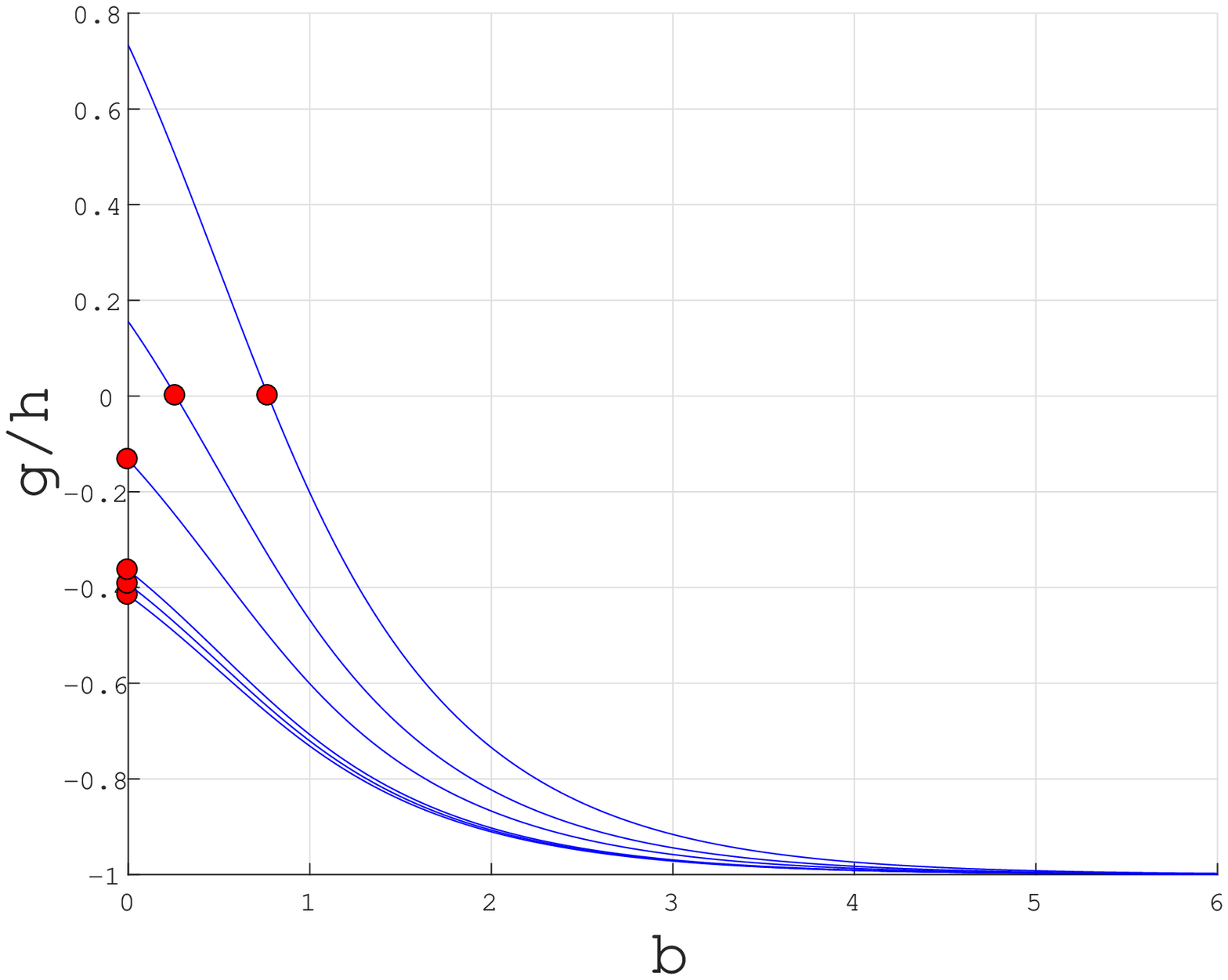}    \\
 \textbf{Case} 1 & \textbf{Case} 2 \end{tabular}
\end{minipage}
\caption{Plots of $b \mapsto g(b)/h(b)$ for \textbf{Case} 1 (left) and \textbf{Case} 2 (right) for $\beta = 1.01, 1.05,1.1,1.5,2$ and $3$, which correspond to curves from bottom to top. The circles indicate the points at $b^*$.
} \label{plot_g_h}
\end{center}
\end{figure}

\begin{figure}[htbp]
\vspace*{0mm}
\hspace*{-10mm}
\begin{center}
\begin{minipage}{1.0\textwidth}
\centering
\begin{tabular}{cc}
 \includegraphics[scale=0.30]{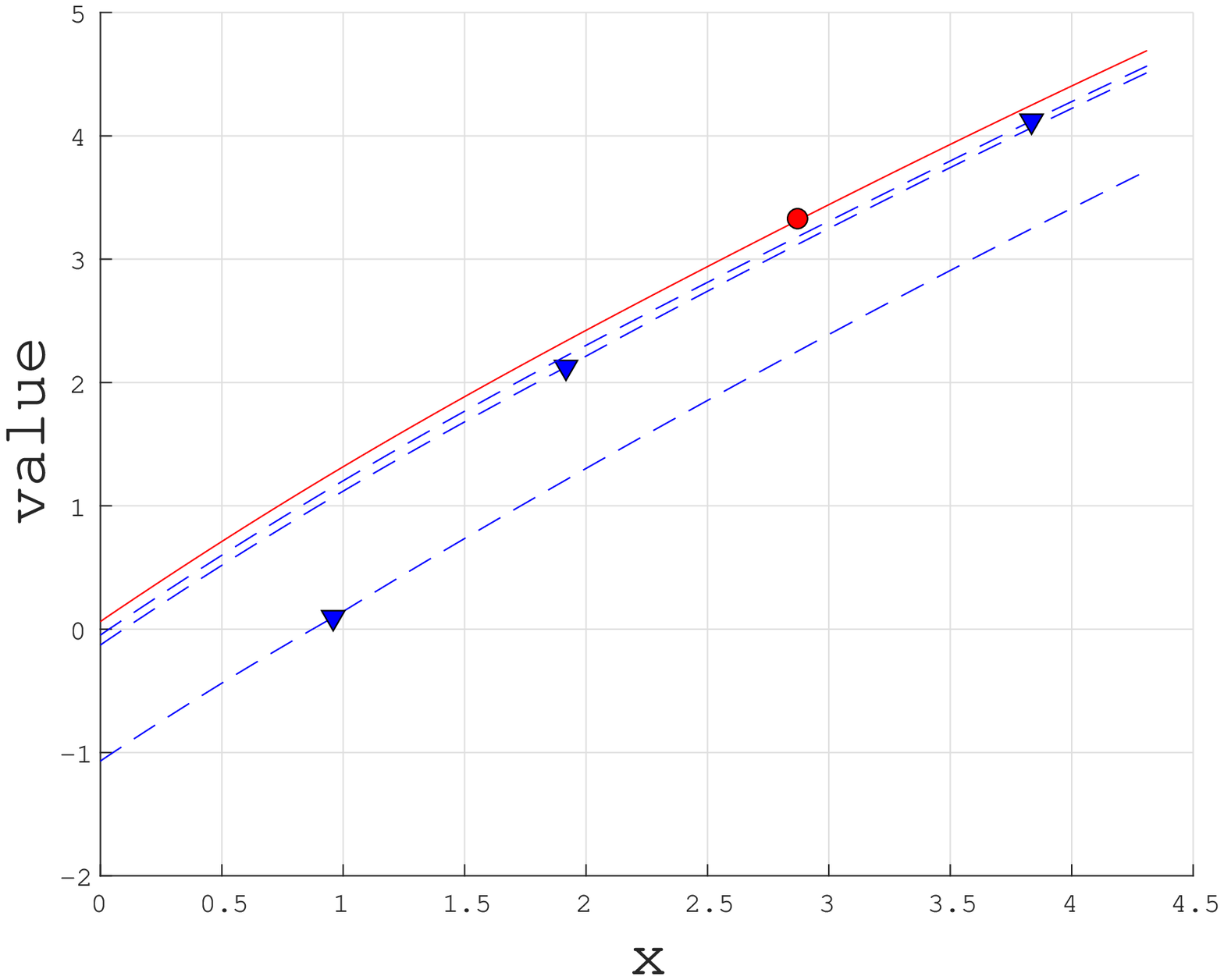} & \includegraphics[scale=0.30]{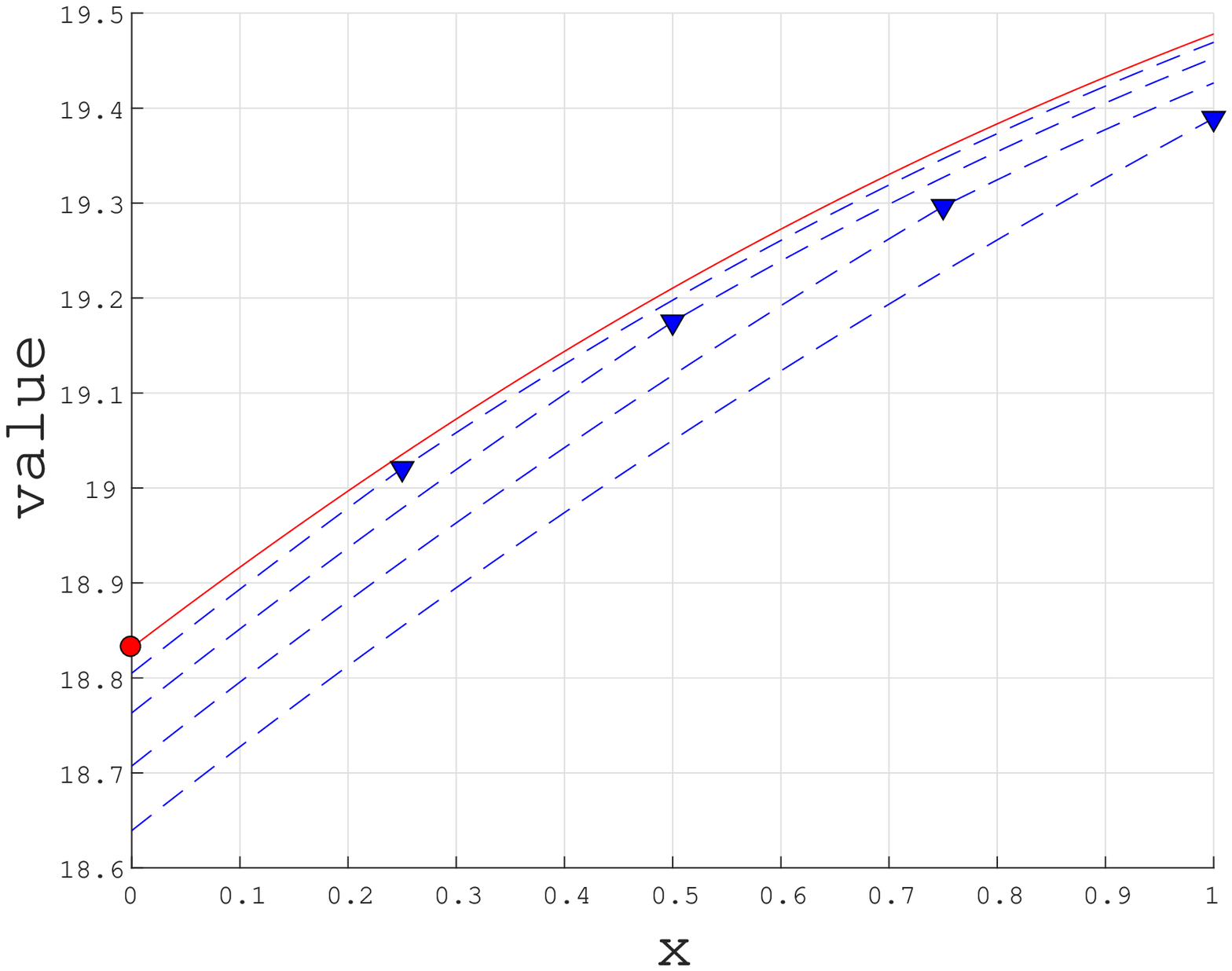}    \\
 \textbf{Case} 1 & \textbf{Case} 2 \end{tabular}
\end{minipage}
\caption{(Left) plots of $x \mapsto v_{b^*}(x)$ (solid) for \textbf{Case} 1 along with $v_b$ (dotted) for $b = b^*/3, 2b^*/3$ and $4b^*/3$, which correspond to dotted curves from bottom to top. (Right) plots of $x \mapsto v_{b^*}(x)$ (solid) for \textbf{Case} 2 along with $v_b$ (dotted) for $b = 1/4, 1/2, 3/4$ and $1$, which correspond to dotted curves from top to bottom. The circles and down-pointing triangles indicate the points at $b^*$ and $b$, respectively. } \label{plot_optimality}
\end{center}
\end{figure}

\begin{figure}[htbp]
\vspace*{0mm}
\hspace*{-10mm}
\begin{center}
\begin{minipage}{1.0\textwidth}
\centering
\begin{tabular}{cc}
 \includegraphics[scale=0.30]{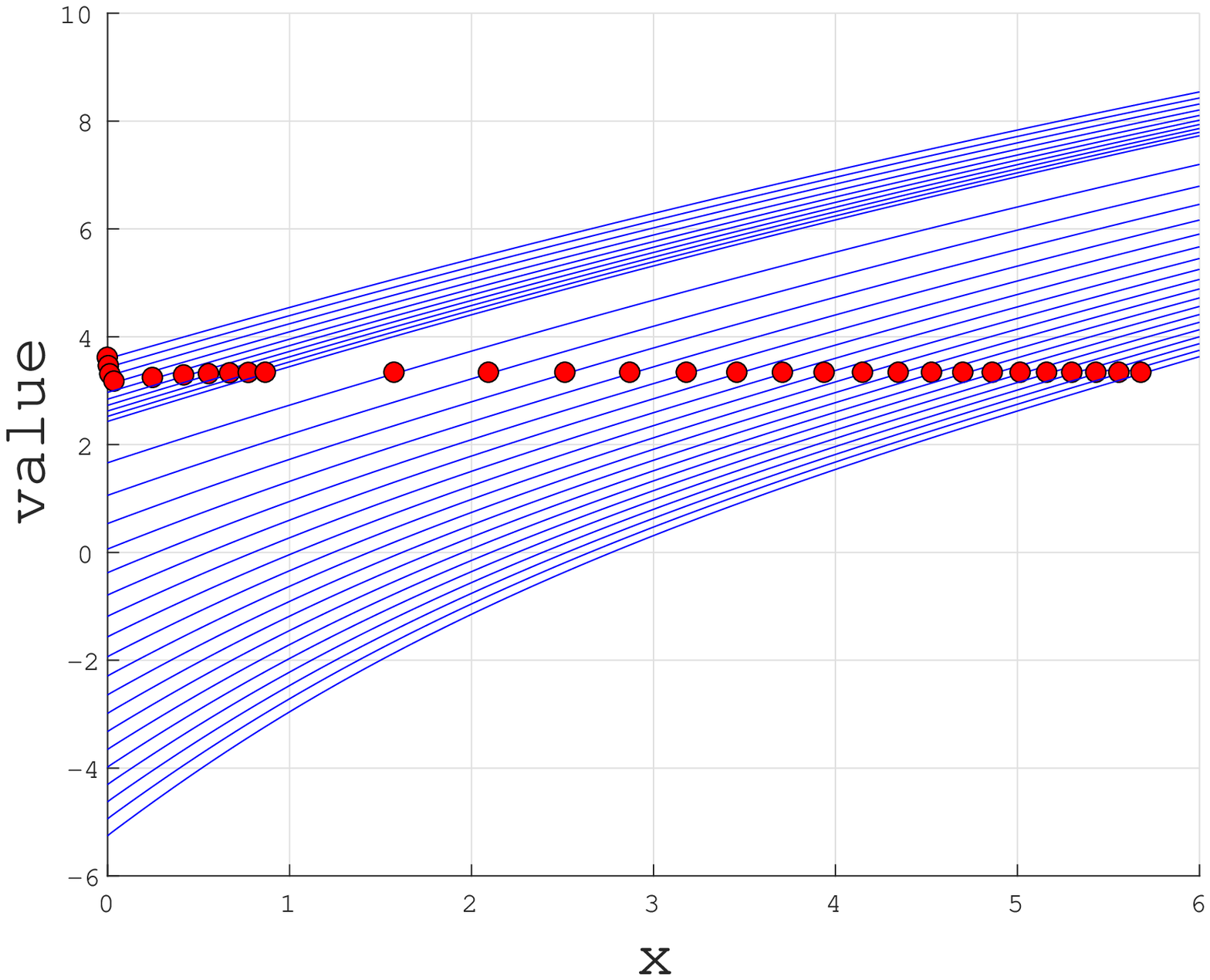} & \includegraphics[scale=0.30]{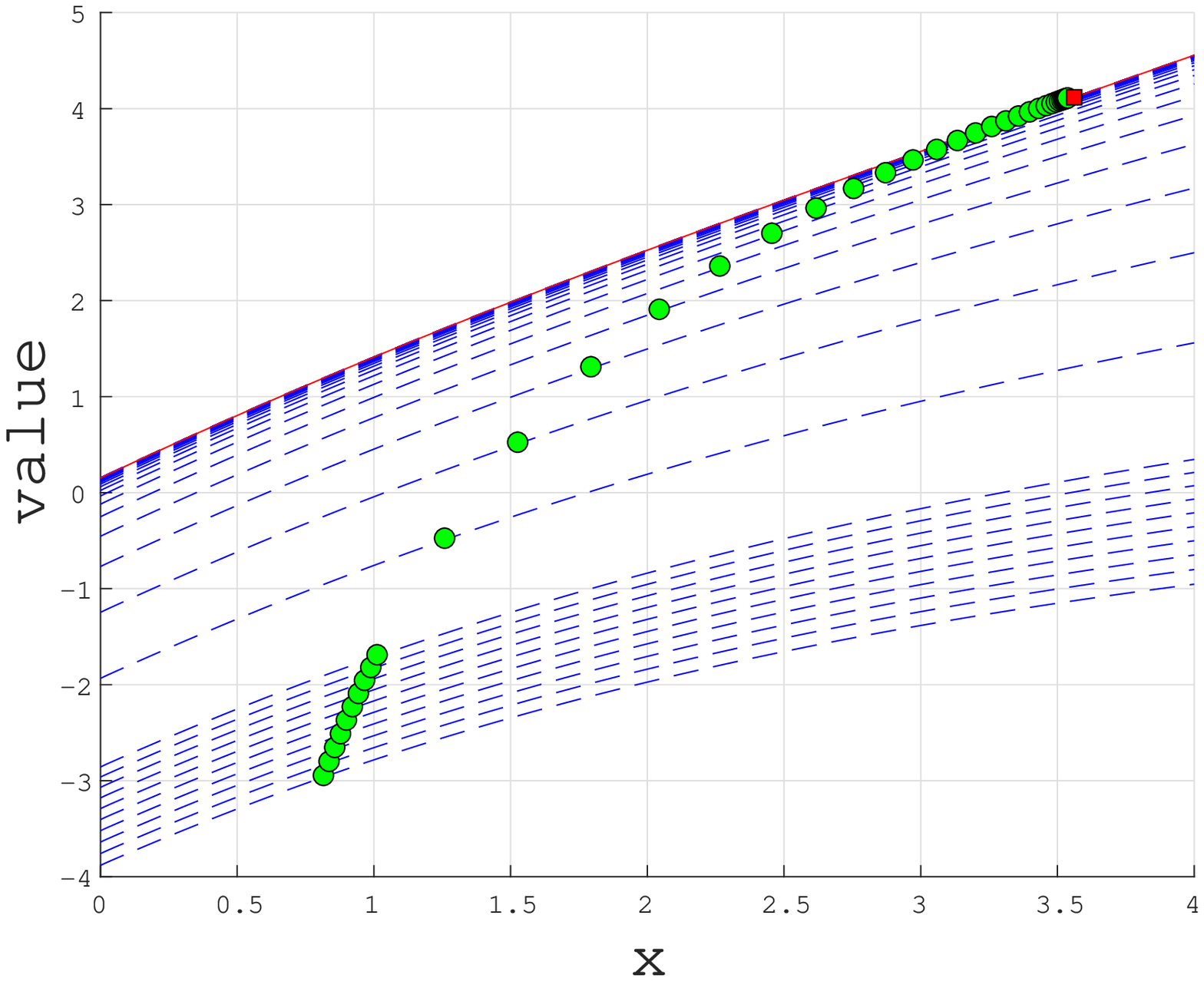}    \\
 sensitivity w.r.t.\ $\beta$ &  sensitivity w.r.t.\ $\delta$ \end{tabular}
\end{minipage}
\caption{(Left) plots of $x \mapsto v_{b^*}(x)$  for $\beta = 1.01$, $1.02$, $\ldots$, $1.09$, $1.1$,$1.2$, $\ldots$, $2.9$ and $3$, which correspond to curves from top to bottom. (Right) plots of $x \mapsto v_{b^*}(x)$ (dotted) $x \mapsto v_{b^*}(x)$  for $\delta = 0.01$, $0.02$, $\ldots$, $0.09$, $0.1$, $0.2$, $\ldots$, $2.9$ and $3$, which correspond to curves from bottom to top, along with the value function in \cite{1.} (solid). The circles indicate the points at $b^*$ and the square indicates the point at the optimal barrier in \cite{1.}.
}
\label{plot_sensitivity}
\end{center}
\end{figure}

%
%
%This section is a summary of the major results of the paper, without formulas.
\begin{singlespacing}
\begin{acknowledgements}
J. L. P\'erez is supported by CONACYT, project no. 241195. K. Yamazaki is in part supported by MEXT KAKENHI grant no. 17K05377. X. Yu is supported by Hong Kong Early Career Scheme under grant no. 25302116.
\end{acknowledgements}
\end{singlespacing}

%References
% BibTeX users  please use  \bibliographystyle{spmpsci_unsrt}. The option spmpsci_unsrt prints the references in JOTA format  in the order they are cited.
%Otherwise, please use the following:

\bibliographystyle{spmpsci_unsrt}

\end{document}